\documentclass[10pt,conference]{IEEEtran}
\usepackage{fixltx2e}
\usepackage{cite}
\usepackage{url}
\usepackage{mathrsfs}
\usepackage{color}
\usepackage{float}
\usepackage[caption=false]{subfig}

\ifCLASSINFOpdf
   \usepackage[pdftex]{graphicx}
   \graphicspath{{Figs/}}
   \DeclareGraphicsExtensions{.pdf,.jpeg,.png}
\else
\fi

\usepackage[cmex10]{amsmath}
\usepackage{amsmath}
\usepackage{amssymb}
\usepackage{amsthm}
\usepackage{amsfonts}
\usepackage{bm}
\usepackage{xfrac}
\usepackage{empheq}
\usepackage[normalem]{ulem} 
\usepackage{soul} 
\usepackage{mathtools}

\newtheorem{theorem}{Theorem}

\newtheorem{example}{Example}
\newtheorem{proposition}{Proposition}
\newtheorem{lemma}{Lemma}

\newtheorem{corollary}{Corollary}
\newtheorem{remark}{Remark}
\theoremstyle{definition}
\newtheorem{definition}{Definition}

\usepackage[a4paper,bindingoffset=0.2in,%
left=1in,right=1in,top=1in,bottom=1in,%
footskip=.25in]{}
\graphicspath{{figs/}}

\interdisplaylinepenalty=2500
\begin{document}
	\newgeometry{left=0.7in,right=0.7in,top=.5in,bottom=1in}
	\title{Bounds for Privacy-Utility Trade-off with Non-zero Leakage}
\vspace{-5mm}
\author{
		\IEEEauthorblockN{Amirreza Zamani, Tobias J. Oechtering, Mikael Skoglund \vspace*{0.5em}
			\IEEEauthorblockA{\\
                              Division of Information Science and Engineering, KTH Royal Institute of Technology \\
				Email: \protect amizam@kth.se, oech@kth.se, skoglund@kth.se }}
		}
	\maketitle

\begin{abstract}
	The design of privacy mechanisms for two scenarios is studied where the private data is hidden or observable. 
	In the first scenario, 
	an agent observes useful data $Y$, which is correlated with private data $X$, and wants to disclose the useful information to a user. A privacy mechanism is employed to generate data $U$ that maximizes the revealed information about $Y$ while satisfying a privacy criterion. 
	In the second scenario, the agent has additionally access to the private data. 
	To this end, the Functional Representation Lemma and Strong Functional Representation Lemma are extended relaxing the independence condition and thereby allowing a certain leakage. 
	Lower bounds on privacy-utility trade-off are derived for the second scenario as well as upper bounds for both scenarios. 
	In particular, for the case where no leakage is allowed, our upper and lower bounds improve previous bounds.
\end{abstract}
\section{Introduction}

In this paper, random variable (RV) $Y$ denotes the useful data and is correlated with the private data denoted by RV $X$. Furthermore, disclosed data is described by RV $U$. Two scenarios are considered in this work, where in both scenarios, an agent wants to disclose the useful information to a user as shown in Fig.~\ref{ISITsys}. In the first scenario, the agent observes $Y$ and has not directly access to $X$, i.e., the private data is hidden. The goal is to design $U$ based on $Y$ that reveals as much information as possible about $Y$ and satisfies a privacy criterion. We use mutual information to measure utility and privacy leakage. In this work, some bounded privacy leakage is allowed, i.e., $I(X;U)\leq \epsilon$. In the second scenario, the agent has access to both $X$ and $Y$ and can design $U$ based on $(X,Y)$ to release as much information as possible about $Y$ while satisfying the bounded leakage constraint.
\\
The privacy mechanism design problem is receiving increased attention in information theory recently. Related works can be found in 
\cite{Calmon2,yamamoto, sankar,borz, gun,khodam,Khodam22,kostala,issa, makhdoumi, dwork1, calmon4, issajoon, asoo, Total, issa2}. 
In \cite{Calmon2}, fundamental limits of the privacy utility trade-off measuring the leakage using estimation-theoretic guarantees are studied.
In \cite{yamamoto}, a source coding problem with secrecy is studied.
\begin{figure}[]
	\centering
	\includegraphics[scale = .15]{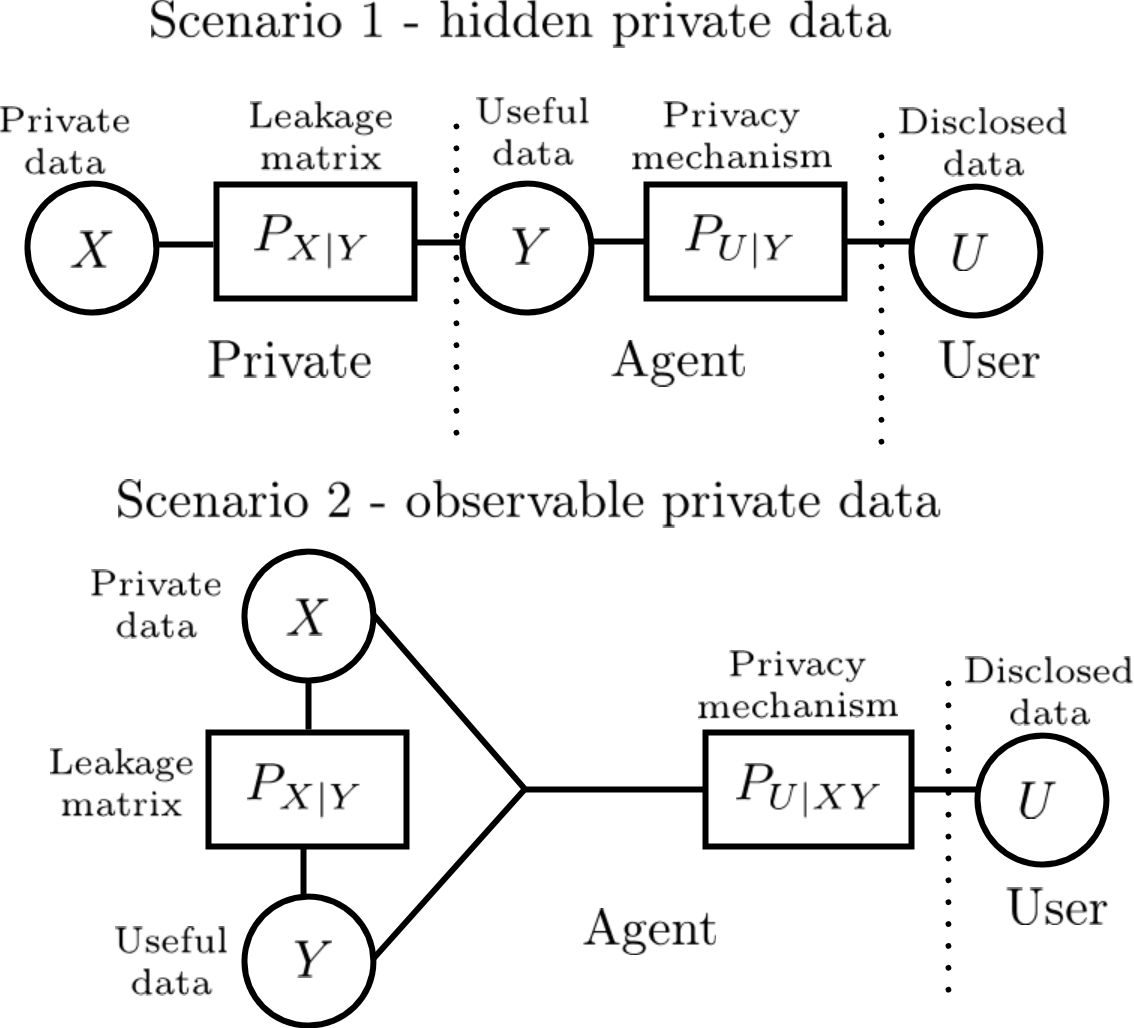}
	\caption{Considered two scenarios where the agent has only access to $Y$ and the agent has access to both $X$ and $Y$.}
	\label{ISITsys}
\end{figure}
Privacy-utility trade-offs considering equivocation as measure of privacy and expected distortion as a measure of utility are studied in both \cite{yamamoto} and \cite{sankar}.
In \cite{borz}, the problem of privacy-utility trade-off considering mutual information both as measures of privacy and utility given the Markov chain $X-Y-U$ is studied. It is shown that under perfect privacy assumption, i.e., $\epsilon=0$, the privacy mechanism design problem can be reduced to a linear program. This work has been extended in \cite{gun} considering the privacy utility trade-off with a rate constraint for the disclosed data.
Moreover, in \cite{borz}, it has been shown that information can be only revealed if $P_{X|Y}$ is not invertible. In \cite{khodam}, we designed privacy mechanisms with a per letter privacy criterion considering an invertible $P_{X|Y}$ where a small leakage is allowed. We generalized this result to a non-invertible leakage matrix in \cite{Khodam22}.

Our problem here is closely related to \cite{kostala}, where the problem of \emph{secrecy by design} is studied. Similarly, the two scenarios are considered while the results are however derived under the perfect secrecy assumption, i.e., no leakages are allowed which corresponds to $\epsilon=0$. Bounds on secure decomposition have been derived using the Functional Representation Lemma and new bounds on privacy-utility trade-off for the two scenarios are derived. The bounds are tight when the private data is a deterministic function of the useful data. 

In the present work, we generalize the privacy problems considered in \cite{kostala} by relaxing the perfect privacy constraint and allowing some leakages. To this end, we extend the Functional Representation Lemma relaxing the independence condition. Additionally, we derive bounds by also extending the Strong Functional Representation Lemma, introduced in \cite{kosnane}. Furthermore, in the special case of perfect privacy we find a new upper bound for perfect privacy function by using the \emph{excess functional information} introduced in \cite{kosnane}. We then compare our new lower and upper bounds with the bounds found in \cite{kostala} when the leakage is zero.

\section{system model and Problem Formulation} \label{sec:system}
Let $P_{XY}$ denote the joint distribution of discrete random variables $X$ and $Y$ defined on alphabets $\cal{X}$ and $\cal{Y}$. We assume that cardinality $|\mathcal{X}|$ is finite and $|\mathcal{Y}|$ is finite or countably infinite.
We represent $P_{XY}$ by a matrix defined on $\mathbb{R}^{|\mathcal{X}|\times|\mathcal{Y}|}$ and 
marginal distributions of $X$ and $Y$ by vectors $P_X$ and $P_Y$ defined on $\mathbb{R}^{|\mathcal{X}|}$ and $\mathbb{R}^{|\mathcal{Y}|}$ given by the row and column sums of $P_{XY}$. 
We represent the leakage matrix $P_{X|Y}$ by a matrix defined on $\mathbb{R}^{|\mathcal{X}|\times|\cal{Y}|}$.

For both design problems we use mutual information as utility and leakage measures. The privacy mechanism design problems for the two scenarios can be stated as follows 
\begin{align}
g_{\epsilon}(P_{XY})&=\sup_{\begin{array}{c} 
	\substack{P_{U|Y}:X-Y-U\\ \ I(U;X)\leq\epsilon,}
	\end{array}}I(Y;U),\label{main2}\\
h_{\epsilon}(P_{XY})&=\sup_{\begin{array}{c} 
	\substack{P_{U|Y,X}: I(U;X)\leq\epsilon,}
	\end{array}}I(Y;U).\label{main1}
\end{align} 
The relation between $U$ and $Y$ is described by the kernel $P_{U|Y}$ defined on $\mathbb{R}^{|\mathcal{U}|\times|\mathcal{Y}|}$, furthermore, the relation between $U$ and the pair $(Y,X)$ is described by the kernel $P_{U|Y,X}$ defined on $\mathbb{R}^{|\mathcal{U}|\times|\mathcal{Y}|\times|\mathcal{X}|}$.
 The function $h_{\epsilon}(P_{XY})$ is used when the privacy mechanism has access to both the private data and the useful data. The function $g_{\epsilon}(P_{XY})$ is used when the privacy mechanism has only access to the useful data. Clearly, the relation between $h_{\epsilon}(P_{XY})$ and $g_{\epsilon}(P_{XY})$ can be stated as follows
 \begin{align}
 g_{\epsilon}(P_{XY})\leq h_{\epsilon}(P_{XY}).
 \end{align}
 In the following we study the case where $0\leq\epsilon< I(X;Y)$, otherwise the optimal solution of $h_{\epsilon}(P_{XY})$ or $g_{\epsilon}(P_{XY})$ is $H(Y)$ achieved by $U=Y$. 
 \begin{remark}
 	\normalfont
 	For $\epsilon=0$, \eqref{main2} leads to the perfect privacy problem studied in \cite{borz}. It has been shown that for a non-invertible leakage matrix $P_{X|Y}$, $g_0(P_{XY})$ can be obtained by a linear program.
 \end{remark}
 \begin{remark}
 	\normalfont
 	For $\epsilon=0$, \eqref{main1} leads to the secret-dependent perfect privacy function $h_0(P_{XY})$, studied in \cite{kostala}, where upper and lower bounds on $h_0(P_{XY})$ have been derived. 
 \end{remark}

 \section{Main Results}\label{sec:resul}
 In this section, we first recall the Functional Representation Lemma (FRL) \cite[Lemma~1]{kostala} and Strong Functional Representation Lemma (SFRL) \cite[Theorem~1]{kosnane} for discrete $X$ and $Y$. Then we extend them for correlated $X$ and $U$, i.e., $0\leq I(U;X)=\epsilon$ and we call them Extended Functional Representation Lemma (EFRL) and Extended Strong Functional Representation Lemma (ESFRL), respectively.
 \begin{lemma}\label{lemma1} (Functional Representation Lemma \cite[Lemma~1]{kostala}):
 	For any pair of RVs $(X,Y)$ distributed according to $P_{XY}$ supported on alphabets $\mathcal{X}$ and $\mathcal{Y}$ where $|\mathcal{X}|$ is finite and $|\mathcal{Y}|$ is finite or countably infinite, there exists a RV $U$ supported on $\mathcal{U}$ such that $X$ and $U$ are independent, i.e., we have
 	\begin{align}\label{c1}
 	I(U;X)=0,
 	\end{align}
 	$Y$ is a deterministic function of $(U,X)$, i.e., we have
 	\begin{align}
 	H(Y|U,X)=0,\label{c2}
 	\end{align}
 	and 
 	\begin{align}
 	|\mathcal{U}|\leq |\mathcal{X}|(|\mathcal{Y}|-1)+1.\label{c3}
 	\end{align}
 \end{lemma}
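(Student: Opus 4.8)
The plan is to give the classical constructive argument: couple the family of conditional laws $\{P_{Y|X=x}\}_{x\in\mathcal{X}}$ on the unit interval via their ``quantile'' partitions, and then discretize by passing to the common refinement of these partitions.

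Concretely, I would first fix an arbitrary ordering $y_1,y_2,\ldots$ of $\mathcal{Y}$ and, for each $x\in\mathcal{X}$, cut $[0,1)$ into consecutive half-open intervals $I_x(y_j)$ of length $P_{Y|X=x}(y_j)$, so that $\{I_x(y)\}_{y\in\mathcal{Y}}$ is a partition of $[0,1)$ for every $x$ (discarding the zero-length pieces coming from pairs with $P_{Y|X}(y\mid x)=0$). Let $J_1,\ldots,J_m$ be the atoms of the common refinement of all these partitions, obtained by collecting the at most $|\mathcal{X}|(|\mathcal{Y}|-1)$ interior endpoints; then $m\le|\mathcal{X}|(|\mathcal{Y}|-1)+1$, each $I_x(y)$ is a union of some atoms, and I write $k\in S(x,y)$ when $J_k\subseteq I_x(y)$ and set $\ell_k=|J_k|$. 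I then define the joint law of $(X,Y,U)$, with $U$ valued in $\{1,\ldots,m\}$, by retaining the given $P_{XY}$ and putting $P_{U|X,Y}(k\mid x,y)=\ell_k/P_{Y|X}(y\mid x)$ for $k\in S(x,y)$ and $0$ otherwise.

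Two short computations then close the proof. Since, for each fixed $x$, the atom $J_k$ lies inside exactly one of the intervals $I_x(\cdot)$, the index $k$ belongs to $S(x,y)$ for exactly one $y$; hence $P_{U|X}(k\mid x)=\sum_y P_{Y|X}(y\mid x)\,P_{U|X,Y}(k\mid x,y)=\ell_k$, which is independent of $x$, giving $U\perp X$ and therefore \eqref{c1}. Conversely, Bayes' rule gives $P_{Y|U,X}(y\mid k,x)\propto \ell_k\,\mathbf{1}[k\in S(x,y)]$, and as there is a unique such $y$ this conditional law is a point mass, so $Y$ is a deterministic function of $(U,X)$ and \eqref{c2} holds; finally $|\mathcal{U}|=m$ yields \eqref{c3}. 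For countably infinite $\mathcal{Y}$ the refinement is still countable and \eqref{c3} is the (trivial) statement that $|\mathcal{U}|$ is at most countable.

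I expect the only delicate point to be the verification that $\sum_y P_{Y|X}(y\mid x)\,P_{U|X,Y}(k\mid x,y)$ collapses to $\ell_k$ uniformly in $x$: this is exactly where the consecutive-interval (quantile) structure of the construction is used, and it is the reason one obtains genuine independence $I(U;X)=0$ rather than merely equality of the marginals of $U$ across different values of $x$. Getting the cardinality bound simultaneously tight is then just the bookkeeping of counting endpoints of the common refinement.
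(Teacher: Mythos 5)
Your construction is correct and is essentially the standard proof of the Functional Representation Lemma that the paper imports by citation from \cite{kostala} (the common refinement of the quantile partitions of $[0,1)$ induced by the laws $P_{Y|X=x}$, with $U$ the atom index): the key verifications that $P_{U|X}(k\mid x)=\ell_k$ for every $x$ and that each atom determines $y$ given $x$ are exactly right, as is the endpoint count giving $|\mathcal{U}|\leq|\mathcal{X}|(|\mathcal{Y}|-1)+1$. Since the paper offers no proof of its own, there is nothing further to compare.
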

 \begin{lemma}\label{lemma2} (Strong Functional Representation Lemma \cite[Theorem~1]{kosnane}):
 	For any pair of RVs $(X,Y)$ distributed according to $P_{XY}$ supported on alphabets $\mathcal{X}$ and $\mathcal{Y}$ where $|\mathcal{X}|$ is finite and $|\mathcal{Y}|$ is finite or countably infinite with $I(X,Y)< \infty$, there exists a RV $U$ supported on $\mathcal{U}$ such that $X$ and $U$ are independent, i.e., we have
 	\begin{align*}
 	I(U;X)=0,
 	\end{align*}
 	$Y$ is a deterministic function of $(U,X)$, i.e., we have 
 	\begin{align*}
 	H(Y|U,X)=0,
 	\end{align*}
 	$I(X;U|Y)$ can be upper bounded as follows
 	\begin{align*}
 	I(X;U|Y)\leq \log(I(X;Y)+1)+4,
 	\end{align*}
 	and 
 	$
 	|\mathcal{U}|\leq |\mathcal{X}|(|\mathcal{Y}|-1)+2.
 	$
 \end{lemma}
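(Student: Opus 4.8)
The plan is to prove the Strong Functional Representation Lemma via the \emph{Poisson functional representation}: the ``extra randomness'' $U$ will be a marked Poisson process manufactured independently of $X$. Restrict $\mathcal{Y}$ to $\mathrm{supp}(P_Y)$ and note $P_{Y|X=x}\ll P_Y$ for every $x$ (immediate from $P_Y=\sum_x P_X(x)P_{Y|X=x}$). Independently of $X$, draw a unit-rate Poisson process $0<T_1<T_2<\cdots$ on $\mathbb{R}_{>0}$ and an i.i.d.\ sequence $\bar Y_1,\bar Y_2,\dots\sim P_Y$, and set $U:=\{(T_i,\bar Y_i):i\ge 1\}$. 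Given $X=x$, let
\begin{align}
K:=\arg\min_{i\ge 1}\frac{T_i}{r_x(\bar Y_i)},\qquad r_x(y):=\frac{dP_{Y|X=x}}{dP_Y}(y),
\end{align}
and output $Y:=\bar Y_K$. Then $U\perp X$ by construction and $Y$ is a deterministic function of $(U,X)$ since $K$ is; the only nonobvious point is distributional correctness, which I would obtain from the mapping theorem for Poisson processes: the marked points $(T_i/r_x(\bar Y_i),\,\bar Y_i)$ again form a Poisson process, and the one with the smallest first coordinate carries mark $y$ with probability proportional to $r_x(y)P_Y(y)=P_{Y|X=x}(y)$, so $Y\mid\{X=x\}\sim P_{Y|X=x}$ and $(X,Y)\sim P_{XY}$. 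This establishes $I(U;X)=0$ and $H(Y\mid U,X)=0$.

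The crux, and the main obstacle, is the bound $I(X;U\mid Y)\le\log(I(X;Y)+1)+4$. I would first reduce $U$ to the selected index $K$: conditioned on $Y$, the part of $U$ not determined by $K$ costs only an absolute constant, giving $I(X;U\mid Y)\le H(K\mid Y)+O(1)$. What makes the Poisson scheme succeed where naive rejection sampling fails is the estimate
\begin{align}
\mathbb{E}[\log K\mid X=x]\le D(P_{Y|X=x}\,\|\,P_Y)+O(1),
\end{align}
a consequence of the exponential/Poisson structure of the competition defining $K$. Averaging over $X$ gives $\mathbb{E}[\log K]\le I(X;Y)+O(1)$; combining this with a maximum-entropy bound for positive-integer-valued random variables (entropy controlled by $\mathbb{E}[\log K]$) and Jensen's inequality ($\mathbb{E}\log(\cdot)\le\log\mathbb{E}(\cdot)$) collapses everything into $\log(I(X;Y)+1)+4$. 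The genuinely fiddly work is bookkeeping the $O(1)$ terms---the factors $\log e$ and the various $+1$'s---so that they total exactly $4$; note that one cannot simply invoke the ordinary FRL, whose construction controls $I(X;U\mid Y)$ not at all.

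Finally, the $U$ above lives on an uncountable alphabet, so a support-reduction step remains. Replace $U$ by the deterministic map $\phi\colon\mathcal{X}\to\mathcal{Y}$ it induces through $x\mapsto Y$; this only helps, since $X\perp\phi$, $H(Y\mid X,\phi)=0$, and $I(X;\phi\mid Y)\le I(X;U\mid Y)$. Now $P_\phi$ ranges over distributions on maps $\mathcal{X}\to\mathcal{Y}$, the number $H(Y\mid\phi=\varphi)$ depends only on $\varphi$ and $P_X$, and, using $I(X;\phi)=0$ together with $H(Y\mid X,\phi)=0$,
\begin{align}
I(X;\phi\mid Y)=H(Y\mid\phi)-I(X;Y),
\end{align}
so $I(X;\phi\mid Y)$ is an affine functional of $P_\phi$. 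Applying the Fenchel--Eggleston--Carath\'eodory theorem to the polytope of $P_\phi$ that preserve the $|\mathcal{X}|(|\mathcal{Y}|-1)$ independent entries of the $(X,Y)$ joint, normalization, and the single value $H(Y\mid\phi)$ produces a $U$ supported on at most $|\mathcal{X}|(|\mathcal{Y}|-1)+2$ symbols that still satisfies $I(U;X)=0$, $H(Y\mid U,X)=0$, and the same (hence admissible) value of $I(X;U\mid Y)$. This is the asserted cardinality---one symbol past the FRL bound, the cost of additionally pinning down $H(Y\mid\phi)$.
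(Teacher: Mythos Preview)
The paper does not prove this lemma at all: it is quoted verbatim as \cite[Theorem~1]{kosnane} and used as a black box (the only engagement with its proof is Remark~3, which tightens the additive constant by inspection). So there is no ``paper's own proof'' to compare against; what you have written is essentially a sketch of the original Li--El~Gamal argument from \cite{kosnane}, and at that level it is correct: Poisson functional representation, the estimate $\mathbb{E}[\log K\mid X=x]\le D(P_{Y|X=x}\|P_Y)+O(1)$, a maximum-entropy bound turning $\mathbb{E}[\log K]$ into $H(K)$, and a Carath\'eodory reduction for the cardinality.

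One place where your sketch is looser than it should be: the step ``conditioned on $Y$, the part of $U$ not determined by $K$ costs only an absolute constant, giving $I(X;U\mid Y)\le H(K\mid Y)+O(1)$'' is not how the argument actually runs, and as stated it is not obviously true (the unselected Poisson points are \emph{not} conditionally independent of $X$ given $Y$ in any clean way). The clean route is the one you take later anyway: pass immediately from the Poisson process to the induced map $\phi:\mathcal{X}\to\mathcal{Y}$, use the identity $I(X;\phi\mid Y)=H(Y\mid\phi)-I(X;Y)$ (valid because $I(X;\phi)=0$ and $H(Y\mid X,\phi)=0$), and then bound $H(Y\mid\phi)\le H(K\mid X)$ directly, since for each fixed $x$ the value $\phi(x)=\bar Y_{K}$ is determined by $K$ together with the $\bar Y$-sequence while the latter is already part of $\phi$'s randomness. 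Averaging $H(K\mid X=x)$ over $x$ and applying your $\mathbb{E}[\log K]$ bound plus Jensen then gives the $\log(I(X;Y)+1)+4$. With that correction the sketch matches \cite{kosnane}.
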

 \begin{remark}
 	By checking the proof in \cite[Th.~1]{kosnane}, the term $e^{-1}\log(e)+2+\log(I(X;Y)+e^{-1}\log(e)+2)$ can be used instead of $\log(I(X;Y)+1)+4$. 
 \end{remark}
 \begin{lemma}\label{lemma3} (Extended Functional Representation Lemma):
 	For any $0\leq\epsilon< I(X;Y)$ and pair of RVs $(X,Y)$ distributed according to $P_{XY}$ supported on alphabets $\mathcal{X}$ and $\mathcal{Y}$ where $|\mathcal{X}|$ is finite and $|\mathcal{Y}|$ is finite or countably infinite, there exists a RV $U$ supported on $\mathcal{U}$ such that the leakage between $X$ and $U$ is equal to $\epsilon$, i.e., we have
 	\begin{align*}
 	I(U;X)= \epsilon,
 	\end{align*}
 	$Y$ is a deterministic function of $(U,X)$, i.e., we have  
 	\begin{align*}
 	H(Y|U,X)=0,
 	\end{align*}
 	and 
 	$
 	|\mathcal{U}|\leq \left[|\mathcal{X}|(|\mathcal{Y}|-1)+1\right]\left[|\mathcal{X}|+1\right].
 	$
 \end{lemma}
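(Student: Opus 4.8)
The plan is to bootstrap from the ordinary Functional Representation Lemma (Lemma~\ref{lemma1}) and then inject exactly the right amount of leakage by appending a partially erased copy of $X$ to its output. We may assume $I(X;Y)>0$, since otherwise the hypothesis $0\le\epsilon<I(X;Y)$ is vacuous; consequently $H(X)>0$. First I would apply Lemma~\ref{lemma1} to $(X,Y)$ to obtain a RV $U'$ with $I(U';X)=0$, $H(Y|U',X)=0$, and $|\mathcal{U}'|\le|\mathcal{X}|(|\mathcal{Y}|-1)+1$; by its construction $U'$ is generated from $Y$ together with auxiliary randomness that is independent of $(X,Y)$, so the joint law of $(X,Y,U')$ is well defined.

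Next, fix a new symbol $\mathsf{e}\notin\mathcal{X}$ and a Bernoulli random variable $E$ with $\Pr\{E=1\}=1-p$ that is drawn independently of $(X,Y,U')$, where $p:=1-\epsilon/H(X)$. Because $0\le\epsilon<I(X;Y)\le H(X)$ we have $p\in(0,1]$, so this is a legitimate erasure probability. Define $V:=X$ on $\{E=1\}$ and $V:=\mathsf{e}$ on $\{E=0\}$, and set $U:=(U',V)$, which takes values in $\mathcal{U}'\times(\mathcal{X}\cup\{\mathsf{e}\})$. This already yields the stated cardinality bound, $|\mathcal{U}|\le|\mathcal{U}'|\,(|\mathcal{X}|+1)\le[\,|\mathcal{X}|(|\mathcal{Y}|-1)+1\,]\,[\,|\mathcal{X}|+1\,]$.

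It remains to verify the two information-theoretic properties. The deterministic-function property is immediate from monotonicity of conditional entropy: $H(Y|U,X)=H(Y|U',V,X)\le H(Y|U',X)=0$. For the leakage I would use the chain rule $I(U;X)=I(U',V;X)=I(V;X)+I(U';X\mid V)$ and argue that the second term vanishes: on $\{V=x\}$ with $x\in\mathcal{X}$ the value of $X$ is known, so the conditional mutual information is $0$; on $\{V=\mathsf{e}\}=\{E=0\}$ the triple $(X,Y,U')$ retains its unconditional joint law since $E$ is independent of it, so $I(U';X\mid V=\mathsf{e})=I(U';X)=0$. Hence $I(U';X\mid V)=0$. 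A short direct computation then gives $I(V;X)=H(V)-H(V\mid X)=(1-p)H(X)=\epsilon$. Therefore $I(U;X)=\epsilon$, completing the argument.

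The step I expect to require the most care is the identity $I(U';X\mid V)=0$, which hinges on the fact that the erasure coin $E$ is independent of the whole triple $(X,Y,U')$ together with the independence $I(U';X)=0$ furnished by Lemma~\ref{lemma1}; a second point worth stating explicitly is the inequality $H(X)\ge I(X;Y)$, which is what makes $p\ge 0$ and hence lets the construction hit the target value $\epsilon$ \emph{exactly} (rather than only an interval of leakage values) via the explicit linear relation $I(V;X)=(1-p)H(X)$. The boundary cases are consistent with the claim: when $I(X;Y)=0$ there is nothing to prove, and when $\epsilon=0$ one has $p=1$, $V\equiv\mathsf{e}$, and $U$ reduces to $U'$, recovering Lemma~\ref{lemma1}.
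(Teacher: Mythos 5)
Your construction is exactly the paper's: take the FRL output and append a copy of $X$ that is erased (replaced by a fresh symbol) independently with probability $1-\epsilon/H(X)$, then verify $I(U;X)=\epsilon$ via the chain rule and $H(Y|U,X)=0$ by monotonicity. The only difference is cosmetic bookkeeping — you expand $I(U',V;X)$ as $I(V;X)+I(U';X\mid V)$ while the paper expands in the other order — and your argument that the conditional term vanishes is correct, so the proof stands as written.
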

 \begin{proof}
 	Let $\tilde{U}$ be the RV found by FRL and let $W=\begin{cases}
 	X,\ \text{w.p}.\ \alpha\\
 	c,\ \ \text{w.p.}\ 1-\alpha
 	\end{cases}$, where $c$ is a constant which does not belong to the support of $X$ and $Y$ and $\alpha=\frac{\epsilon}{H(X)}$. We show that $U=(\tilde{U},W)$ satisfies the conditions. We have
 	\begin{align*}
 	I(X;U)&=I(X;\tilde{U},W)\\&=I(\tilde{U};X)+I(X;W|\tilde{U})\\&\stackrel{(a)}{=}H(X)-H(X|\tilde{U},W)\\&=H(X)-\alpha H(X|\tilde{U},X)-(1-\alpha)H(X|\tilde{U},c)\\&=H(X)-(1-\alpha)H(X)=\alpha H(X)=\epsilon,	
 	\end{align*}
 	where in (a) we used the fact that $X$ and $\tilde{U}$ are independent. Furthermore,
 	\begin{align*}
 	H(Y|X,U)&=H(Y|X,\tilde{U},W)\\&=\alpha H(Y|X,\tilde{U})+(1-\alpha)H(Y|X,\tilde{U},c)\\&=H(Y|X,\tilde{U})=0.
 	\end{align*}
 	In the last line we used the fact that $\tilde{U}$ is produced by FRL.
 \end{proof}
 \begin{lemma}\label{lemma4} (Extended Strong Functional Representation Lemma):
 	For any $0\leq\epsilon< I(X;Y)$ and pair of RVs $(X,Y)$ distributed according to $P_{XY}$ supported on alphabets $\mathcal{X}$ and $\mathcal{Y}$ where $|\mathcal{X}|$ is finite and $|\mathcal{Y}|$ is finite or countably infinite with $I(X,Y)< \infty$, there exists a RV $U$ supported on $\mathcal{U}$ such that the leakage between $X$ and $U$ is equal to $\epsilon$, i.e., we have
 	\begin{align*}
 	I(U;X)= \epsilon,
 	\end{align*}
 	$Y$ is a deterministic function of $(U,X)$, i.e., we have 
 	\begin{align*}
 	H(Y|U,X)=0,
 	\end{align*}
 	$I(X;U|Y)$ can be  upper bounded as follows 
 	\begin{align*}
 	I(X;U|Y)\leq \alpha H(X|Y)+(1-\alpha)\left[ \log(I(X;Y)+1)+4\right],
 	\end{align*}
 	and 
 	$
 	|\mathcal{U}|\leq \left[|\mathcal{X}|(|\mathcal{Y}|-1)+2\right]\left[|\mathcal{X}|+1\right],
 	$
 	where $\alpha =\frac{\epsilon}{H(X)}$.
 \end{lemma}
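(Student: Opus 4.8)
The plan is to mimic the construction in the proof of Lemma~\ref{lemma3}, but to seed it with the RV produced by the SFRL (Lemma~\ref{lemma2}) instead of the FRL, and then to additionally control $I(X;U|Y)$. Concretely, I would let $\tilde U$ be the RV obtained by applying the SFRL to $(X,Y)$, so that $\tilde U$ is independent of $X$, $H(Y|\tilde U,X)=0$, $I(X;\tilde U|Y)\le\log(I(X;Y)+1)+4$, and $|\tilde{\mathcal U}|\le|\mathcal X|(|\mathcal Y|-1)+2$. As in Lemma~\ref{lemma3}, I introduce the ``switch'' $W$ with $W=X$ with probability $\alpha=\frac{\epsilon}{H(X)}$ and $W=c$ with probability $1-\alpha$, where $c\notin\mathcal X\cup\mathcal Y$, drawn independently of $(X,Y,\tilde U)$, and set $U=(\tilde U,W)$.

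The verification of $I(X;U)=\epsilon$ and $H(Y|X,U)=0$ is then verbatim the computation in the proof of Lemma~\ref{lemma3}: expanding $I(X;\tilde U,W)$ and using independence of $\tilde U$ and $X$ yields $I(X;U)=H(X)-(1-\alpha)H(X)=\alpha H(X)=\epsilon$, while conditioning on the two values of $W$ gives $H(Y|X,\tilde U,W)=H(Y|X,\tilde U)=0$. The cardinality bound follows from $|\mathcal U|\le|\tilde{\mathcal U}|\cdot|\mathcal W|$ together with $|\mathcal W|\le|\mathcal X|+1$.

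The genuinely new step is bounding $I(X;U|Y)=I(X;\tilde U,W|Y)$. I would introduce the binary indicator $S=\mathbf{1}\{W\neq c\}$, which is a deterministic function of $W$ and is independent of $(X,Y,\tilde U)$. Since $S$ is a function of $W$ and is independent of $X$ given $Y$, the chain rule gives $I(X;\tilde U,W|Y)=I(X;\tilde U,W,S|Y)=I(X;S|Y)+I(X;\tilde U,W|S,Y)=I(X;\tilde U,W|S,Y)$, and conditioning on $S$,
\begin{align*}
I(X;\tilde U,W|S,Y)=\alpha\,I(X;\tilde U,X|S=1,Y)+(1-\alpha)\,I(X;\tilde U|S=0,Y).
\end{align*}
For the first term, $I(X;\tilde U,X|S=1,Y)=H(X|S=1,Y)=H(X|Y)$ because $X$ appears in the conditioning and $S$ is independent of $(X,Y)$; for the second term, $I(X;\tilde U|S=0,Y)=I(X;\tilde U|Y)\le\log(I(X;Y)+1)+4$ by the independence of $S$ from $(X,Y,\tilde U)$ and the SFRL bound. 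Combining the two yields the claimed inequality.

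I do not anticipate a real obstacle here; the only point requiring care is the bookkeeping of independence — namely that the switch is drawn independently of $(X,Y,\tilde U)$, so that conditioning on $S$ does not alter the joint law of $(X,Y,\tilde U)$ and in particular preserves the SFRL estimate. Once that is noted, every remaining step is an elementary chain-rule manipulation. One could equally avoid introducing $S$ and instead write $I(X;\tilde U,W|Y)=I(X;W|Y)+I(X;\tilde U|W,Y)$ and analyze the two realizations of $W$ directly, but the switch-variable formulation makes the conditional-independence arguments the cleanest.
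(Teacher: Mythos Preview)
Your proposal is correct and follows essentially the same construction as the paper: take $\tilde U$ from the SFRL, append the switch $W$, and set $U=(\tilde U,W)$; the verifications of $I(X;U)=\epsilon$, $H(Y|X,U)=0$, and the cardinality bound are identical. The only cosmetic difference is in the bookkeeping for $I(X;U|Y)$: the paper applies the chain rule in the order $I(X;\tilde U|Y)+I(X;W|\tilde U,Y)$ and computes $I(X;W|\tilde U,Y)=\alpha H(X|\tilde U,Y)$ directly, whereas you introduce the indicator $S$ and condition on it---both routes yield the same convex combination $\alpha H(X|Y)+(1-\alpha)I(X;\tilde U|Y)$.
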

 \begin{proof}
 	Let $\tilde{U}$ be the RV found by SFRL and $W$ be the same RV which is used to prove Lemma~\ref{lemma3}. It is sufficient to show that $I(X;U|Y)\leq \alpha H(X|Y)+(1-\alpha)\left[ \log(I(X;Y)+1)+4\right]$ since all other properties are already proved in Lemma~3. We have
 	\begin{align*}
 	I(X;\tilde{U},W|Y)&=I(X;\tilde{U}|Y)+I(X,W|\tilde{U},Y)\\&\stackrel{(a)}{=}I(X;\tilde{U}|Y)+\alpha H(X|\tilde{U},Y)\\&=I(X;\tilde{U}|Y)+\alpha(H(X|Y)-I(X;\tilde{U}|Y))\\&=\alpha H(X|Y)+(1-\alpha)I(X;\tilde{U}|Y)\\&\stackrel{(b)}{\leq} \!\alpha H(X|Y)\!+\!(1-\alpha)\!\left[ \log(I(X;Y)\!+\!1)\!+\!4\right],
 	\end{align*}
 	where in step (a) we used the fact that 
 	\begin{align*}
 	I(X,W|\tilde{U},Y) &= H(X|\tilde{U},Y)-H(X|W,\tilde{U},Y)\\&=
 	H(X|\tilde{U},Y)-(1-\alpha)H(X|\tilde{U},Y)\\&=\alpha H(X|\tilde{U},Y),
 	\end{align*}
 	and (b) follows since $\tilde{U}$ is produced by SFRL.
 \end{proof}
 In the next lemma, 
 we show that there exists a RV $U$ that satisfies \eqref{c1}, \eqref{c2} and has bounded entropy. The next lemma is a generalization of \cite[Lemma~2]{kostala} for dependent $X$ and $U$. 
\begin{lemma}
	For any pair of RVs $(X,Y)$ distributed according to $P_{XY}$ supported on alphabets $\mathcal{X}$ and $\mathcal{Y}$, where $|\mathcal{X}|$ is finite and $|\mathcal{Y}|$ is finite or countably infinite, there exists RV $U$ such that it satisfies \eqref{c1}, \eqref{c2}, and
	\begin{align*}
	H(U)\leq \sum_{x\in\mathcal{X}}H(Y|X=x)+\epsilon+h(\alpha)
	\end{align*}
	with $\alpha=\frac{\epsilon}{H(X)}$ and $h(\cdot)$ denotes the binary entropy function.
\end{lemma}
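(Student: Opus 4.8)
The plan is to reuse the construction of the Extended Functional Representation Lemma (Lemma~\ref{lemma3}), but to instantiate its ingredient $\tilde{U}$ by an \emph{entropy-efficient} realization of the Functional Representation Lemma rather than by the cardinality-optimized one. Concretely, I would take mutually independent random variables $Y_1,\dots,Y_{|\mathcal{X}|}$, jointly independent of $X$, with $Y_x\sim P_{Y|X=x}$, set $\tilde{U}=(Y_1,\dots,Y_{|\mathcal{X}|})$, and read off $Y:=Y_X$. A one-line check shows this reproduces the law $P_{XY}$, so $\tilde{U}$ is independent of $X$ and $Y$ is a deterministic function of $(\tilde{U},X)$, i.e.\ \eqref{c1} and \eqref{c2} hold; moreover $H(\tilde{U})=\sum_{x}H(Y_x)=\sum_{x\in\mathcal{X}}H(Y|X=x)$. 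This is (a version of) the random variable behind \cite[Lemma~2]{kostala}. Then, exactly as in the proof of Lemma~\ref{lemma3}, I would let $W$ equal $X$ with probability $\alpha=\epsilon/H(X)$ and a fresh symbol $c\notin\mathrm{supp}(X)\cup\mathrm{supp}(Y)$ with probability $1-\alpha$, and put $U=(\tilde{U},W)$.

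The first step is then to observe that $U=(\tilde{U},W)$ retains the required structural properties. Since the argument in the proof of Lemma~\ref{lemma3} uses only that $\tilde{U}$ obeys \eqref{c1} and \eqref{c2} and never the particular FRL construction, it carries over verbatim and gives $I(U;X)=\epsilon$ and $H(Y|U,X)=0$. The second, and genuinely quantitative, step is the entropy bound. By subadditivity,
\begin{align*}
H(U)=H(\tilde{U},W)\le H(\tilde{U})+H(W)\le \sum_{x\in\mathcal{X}}H(Y|X=x)+H(W).
\end{align*}
To evaluate $H(W)$, I would condition on the indicator $\mathbb{1}\{W\neq c\}\sim\mathrm{Bern}(\alpha)$: given this indicator, $W$ is either deterministic ($=c$) or distributed as $X$, so $H(W)=h(\alpha)+\alpha H(X)=h(\alpha)+\epsilon$, using $\alpha H(X)=\epsilon$. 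Substituting gives $H(U)\le \sum_{x\in\mathcal{X}}H(Y|X=x)+\epsilon+h(\alpha)$, as claimed.

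I do not anticipate a serious obstacle: the substance is just the remark that one may feed the entropy-optimal FRL variable (rather than the cardinality-optimal one) into the EFRL construction, together with the elementary identity $H(W)=h(\alpha)+\alpha H(X)$. The only points deserving care are (i) confirming that the derivation of Lemma~\ref{lemma3} is indifferent to which FRL variable plays the role of $\tilde{U}$, so the substitution is legitimate, and (ii) the consistency check at $\alpha\to 0$ (equivalently $\epsilon\to 0$), where the bound collapses to $\sum_{x}H(Y|X=x)$, matching \cite[Lemma~2]{kostala}.
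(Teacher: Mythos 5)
Your proof is correct and follows essentially the same route as the paper: take $U=(\tilde U,W)$ with $W$ as in Lemma~\ref{lemma3}, bound $H(U)\le H(\tilde U)+H(W)$ by subadditivity, and compute $H(W)=h(\alpha)+\alpha H(X)=h(\alpha)+\epsilon$. The only cosmetic difference is that you exhibit the entropy-efficient FRL variable explicitly as the product $(Y_1,\dots,Y_{|\mathcal X|})$, whereas the paper simply invokes the construction of \cite[Lemma~1]{kostala} together with the bound $H(\tilde U)\le\sum_{x}H(Y|X=x)$ from \cite[Lemma~2]{kostala}.
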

\begin{proof}
	Let $U=(\tilde{U},W)$ where $W$ is the same RV used in Lemma~\ref{lemma3} and $\tilde{U}$ is produced by FRL which has the same construction as used in proof of \cite[Lemma~1]{kostala}. Thus, by using \cite[Lemma~2]{kostala} we have
	\begin{align*}
	H(\tilde{U})\leq \sum_{x\in\mathcal{X}} H(Y|X=x),
	\end{align*} 
	therefore,
	\begin{align*}
	H(U)&=H(\tilde{U},W)\leq H(\tilde{U})+H(W),\\&\leq\sum_{x\in\mathcal{X}} H(Y|X=x)+H(W),
\end{align*}
where,
\begin{align*}
H(W)\! &= -(1-\alpha)\log(1-\alpha)\!-\!\!\sum_{x\in \mathcal{X}} \alpha P_X(x)\log(\alpha P_X(x)),\\&=h(\alpha)+\alpha H(X),
\end{align*}
which completes the proof.
\end{proof}
Before stating the next theorem we derive an expression for $I(Y;U)$. We have
\begin{align}
I(Y;U)&=I(X,Y;U)-I(X;U|Y),\nonumber\\&=I(X;U)+I(Y;U|X)-I(X;U|Y),\nonumber\\&=I(X;U)\!+\!H(Y|X)\!-\!H(Y|U,X)\!-\!I(X;U|Y).\label{key}
\end{align}
As argued in \cite{kostala}, \eqref{key} is an important observation to find lower and upper bounds for $h_{\epsilon}(P_{XY})$ and $g_{\epsilon}(P_{XY})$.
\begin{theorem}
	For any $0\leq \epsilon< I(X;Y)$ and pair of RVs $(X,Y)$ distributed according to $P_{XY}$ supported on alphabets $\mathcal{X}$ and $\mathcal{Y}$, if $h_{\epsilon}(P_{XY})>\epsilon$ then we have
	\begin{align*}
	H(Y|X)>0.
	\end{align*}
	Furthermore, if $H(Y|X)-H(X|Y)=H(Y)-H(X)>0$, then
	\begin{align*}
	h_{\epsilon}(P_{XY})>\epsilon.
	\end{align*}
\end{theorem}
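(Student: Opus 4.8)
The plan is to prove the two implications separately, each time exploiting the decomposition \eqref{key} of $I(Y;U)$.

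\emph{First implication.} I would argue by contraposition: assume $H(Y|X)=0$ and show $h_{\epsilon}(P_{XY})\le\epsilon$. If $H(Y|X)=0$, then for every kernel $P_{U|Y,X}$ we have $0\le H(Y|U,X)\le H(Y|X)=0$, hence $H(Y|U,X)=0$. Substituting $H(Y|X)=H(Y|U,X)=0$ into \eqref{key} gives $I(Y;U)=I(X;U)-I(X;U|Y)\le I(X;U)$. Restricting to kernels with $I(X;U)\le\epsilon$ and taking the supremum yields $h_{\epsilon}(P_{XY})\le\epsilon$, which is the contrapositive of the claimed statement.

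\emph{Second implication.} First observe that $H(Y|X)-H(X|Y)=H(Y)-H(X)$ always holds, since $H(Y|X)=H(X,Y)-H(X)$ and $H(X|Y)=H(X,Y)-H(Y)$; thus the hypothesis amounts to $H(Y)>H(X)$. I would then apply the Extended Functional Representation Lemma (Lemma~\ref{lemma3}) with the prescribed $\epsilon$, which is legitimate since $0\le\epsilon<I(X;Y)\le H(X)$: it furnishes a RV $U$ constructed from $(X,Y)$ — hence feasible for $h_{\epsilon}(P_{XY})$ — with $I(X;U)=\epsilon$ and $H(Y|U,X)=0$. Substituting into \eqref{key} gives $I(Y;U)=\epsilon+H(Y|X)-I(X;U|Y)$, and since $I(X;U|Y)=H(X|Y)-H(X|U,Y)\le H(X|Y)$ we obtain
\begin{align*}
I(Y;U)\ \ge\ \epsilon+H(Y|X)-H(X|Y)\ =\ \epsilon+H(Y)-H(X)\ >\ \epsilon .
\end{align*}
Because $h_{\epsilon}(P_{XY})\ge I(Y;U)$ for this particular $U$, we conclude $h_{\epsilon}(P_{XY})>\epsilon$.

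\emph{Main difficulty.} There is no substantial obstacle; the argument is essentially a matter of selecting the right tool. The one point that needs care is that we genuinely require a feasible $U$ whose leakage \emph{equals} $\epsilon$, not merely one with leakage at most $\epsilon$: the bound coming out of \eqref{key} is $I(X;U)+H(Y)-H(X)$, and a strictly smaller leakage could drop this below $\epsilon$. This is precisely why Lemma~\ref{lemma3}, which attains the leakage constraint with equality, is the correct instrument rather than the classical Functional Representation Lemma.
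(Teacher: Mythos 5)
Your proposal is correct and follows essentially the same route as the paper: both implications rest on the decomposition \eqref{key}, with the first obtained from the upper bound $h_{\epsilon}(P_{XY})\le H(Y|X)+\epsilon$ (your contrapositive phrasing is equivalent to the paper's direct argument) and the second by plugging the EFRL construction of Lemma~\ref{lemma3} into \eqref{key} and dropping $H(X|Y,U)\ge 0$. Your closing remark that the leakage must be attained with equality — which is exactly why EFRL rather than FRL is needed — matches the paper's reasoning.
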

\begin{proof}
	For proving the first part let $h_{\epsilon}(P_{XY})>\epsilon$. Using \eqref{key} we have
	\begin{align*}
	\epsilon&< h_{\epsilon}(P_{XY})\leq H(Y|X)+\sup_{U:I(X;U)\leq\epsilon} I(X;U)\\&=H(Y|X)+\epsilon
	\Rightarrow 0<H(Y|X).
	\end{align*}
	For the second part assume that $H(Y|X)-H(X|Y)>0$. Let $U$ be produced by EFRL. Thus, using the construction of $U$ as in Lemma~\ref{lemma3} we have $I(X,U)=\epsilon$ and $H(Y|X,U)=0$. Then by using \eqref{key} we obtain
	\begin{align*}
	h_{\epsilon}(P_{XY})&\geq \epsilon\!+\!H(Y|X)\!-\!H(X|Y)+H(X|Y,U)\\&\geq \epsilon\!+\!H(Y|X)\!-\!H(X|Y)>\epsilon.
	\end{align*}
\end{proof}
In the next theorem we provide a lower bound on $h_{\epsilon}(P_{XY})$.
\begin{theorem}\label{th.1}
	For any $0\leq \epsilon< I(X;Y)$ and pair of RVs $(X,Y)$ distributed according to $P_{XY}$ supported on alphabets $\mathcal{X}$ and $\mathcal{Y}$ we have
	\begin{align}\label{th2}
	h_{\epsilon}(P_{XY})\geq \max\{L_1^{\epsilon},L_2^{\epsilon},L_3^{\epsilon}\},
	\end{align}
	where
	\begin{align*}
	L_1^{\epsilon} &= H(Y|X)-H(X|Y)+\epsilon=H(Y)-H(X)+\epsilon,\\
	L_2^{\epsilon} &= H(Y|X)-\alpha H(X|Y)+\epsilon\\&\ -(1-\alpha)\left( \log(I(X;Y)+1)+4 \right),\\
	L_3^{\epsilon} &= \epsilon\frac{H(Y)}{I(X;Y)}+g_0(P_{XY})\left(1-\frac{\epsilon}{I(X;Y)}\right),
\end{align*}
and $\alpha=\frac{\epsilon}{H(X)}$.
The lower bound in \eqref{th2} is tight if $H(X|Y)=0$, i.e., $X$ is a deterministic function of $Y$. Furthermore, if the lower bound $L_1$ is tight then we have $H(X|Y)=0$. 
\end{theorem}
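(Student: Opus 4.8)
The plan is to prove the three lower bounds by exhibiting, for each, a feasible $U$ and reading off $I(Y;U)$ from the identity \eqref{key}, and then to obtain the two tightness statements by pairing \eqref{key} with those same constructions.

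For $h_\epsilon(P_{XY})\ge L_1^\epsilon$ I would take $U$ to be the variable supplied by the Extended Functional Representation Lemma (Lemma~\ref{lemma3}), which satisfies $I(X;U)=\epsilon$ and $H(Y|U,X)=0$. Then \eqref{key} gives $I(Y;U)=\epsilon+H(Y|X)-I(X;U|Y)$, and $I(X;U|Y)=H(X|Y)-H(X|U,Y)\le H(X|Y)$ yields $I(Y;U)\ge \epsilon+H(Y|X)-H(X|Y)=L_1^\epsilon$. For $h_\epsilon(P_{XY})\ge L_2^\epsilon$ I would repeat this with $U$ from the Extended Strong Functional Representation Lemma (Lemma~\ref{lemma4}), which additionally gives $I(X;U|Y)\le \alpha H(X|Y)+(1-\alpha)(\log(I(X;Y)+1)+4)$; substituting this bound into the same identity produces $L_2^\epsilon$. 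For $h_\epsilon(P_{XY})\ge L_3^\epsilon$ I would use time sharing: pick $U_0$ with $I(X;U_0)=0$, $X-Y-U_0$, and $I(Y;U_0)$ arbitrarily close to $g_0(P_{XY})$, introduce an independent bit $Q$ with $P(Q=1)=\lambda$, and set $U=(Q,Y)$ on $\{Q=1\}$ and $U=(Q,U_0)$ on $\{Q=0\}$. Then $X-Y-U$ holds, $I(X;U)=\lambda I(X;Y)$ and $I(Y;U)=\lambda H(Y)+(1-\lambda)I(Y;U_0)$; taking $\lambda=\epsilon/I(X;Y)$ makes the leakage exactly $\epsilon$ and, letting $I(Y;U_0)\to g_0(P_{XY})$, gives $g_\epsilon(P_{XY})\ge L_3^\epsilon$, hence also $h_\epsilon(P_{XY})\ge L_3^\epsilon$. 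Taking the maximum proves \eqref{th2}.

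For the first tightness claim I would assume $H(X|Y)=0$ and supply the matching upper bound from \eqref{key} directly: every feasible $U$ has $I(X;U|Y)=H(X|Y)-H(X|U,Y)=0$, so \eqref{key} with $H(Y|U,X)\ge 0$ and $I(X;U)\le\epsilon$ gives $I(Y;U)\le\epsilon+H(Y|X)=L_1^\epsilon$. Since the previous paragraph already gives $h_\epsilon(P_{XY})\ge\max\{L_1^\epsilon,L_2^\epsilon,L_3^\epsilon\}\ge L_1^\epsilon$, all of these coincide and the bound is tight.

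For the converse, assume $h_\epsilon(P_{XY})=L_1^\epsilon$ with $\epsilon>0$, and write the EFRL variable of Lemma~\ref{lemma3} as $U=(\tilde U,W)$, with $\tilde U$ produced by the FRL (Lemma~\ref{lemma1}) and $W$ equal to the out-of-support constant $c$ from the proof of Lemma~\ref{lemma3} with probability $1-\alpha>0$. The computation in the $L_1^\epsilon$ step in fact shows $I(Y;U)=L_1^\epsilon+H(X|U,Y)$, so tightness forces $H(X|U,Y)=0$; conditioning on the value of $W$ then forces $H(X|\tilde U,Y)=0$. Hence $I(X;\tilde U|Y)=H(X|Y)$, and \eqref{key} applied to $\tilde U$ (which has $I(X;\tilde U)=0$, $H(Y|\tilde U,X)=0$) gives $I(Y;\tilde U)=H(Y|X)-H(X|Y)=H(Y)-H(X)$. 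Now time-share $\tilde U$ against $U=Y$ as in the $L_3^\epsilon$ step, with weight $\lambda=\epsilon/I(X;Y)$ on $U=Y$: the leakage is $\epsilon$ and the utility is $\lambda H(Y)+(1-\lambda)(H(Y)-H(X))=H(Y)-(1-\epsilon/I(X;Y))H(X)$, so $h_\epsilon(P_{XY})\ge H(Y)-(1-\epsilon/I(X;Y))H(X)$. Comparing with $h_\epsilon(P_{XY})=L_1^\epsilon=H(Y)-H(X)+\epsilon$ yields $\epsilon\ge\frac{\epsilon}{I(X;Y)}H(X)$, i.e. $I(X;Y)\ge H(X)$, which forces $H(X|Y)=0$. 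The main obstacle is exactly this converse: one must notice that tightness of $L_1$ upgrades the EFRL variable to the stronger decoupling $H(X|\tilde U,Y)=0$, and then build a time-sharing scheme whose utility strictly exceeds $L_1^\epsilon$ unless $H(X|Y)=0$; the degenerate case $\epsilon=0$ has to be handled separately.
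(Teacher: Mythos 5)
Your derivations of $L_1^\epsilon$ and $L_2^\epsilon$ coincide with the paper's (EFRL, resp.\ ESFRL, plugged into \eqref{key}), and your explicit time-sharing construction for $L_3^\epsilon$ is a correct stand-alone proof of a bound that the paper simply imports from \cite[Remark~2]{shahab}. Your argument for tightness when $H(X|Y)=0$ is also the paper's (the matching upper bound $I(Y;U)\le \epsilon+H(Y|X)$ from \eqref{key} using $I(X;U|Y)=0$), packaged slightly more efficiently via the chain $L_1^\epsilon\le\max\{L_1^\epsilon,L_2^\epsilon,L_3^\epsilon\}\le h_\epsilon(P_{XY})\le L_1^\epsilon$. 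The interesting divergence is the converse. The paper argues structurally: it invokes the specific FRL construction of \cite[Lemma~1]{kostala} and the observation from \cite[Th.~6]{kostala} that, when $X$ is not a function of $Y$, there exist $x,y_1,y_2$ with $P_{X|\tilde{U},Y}(x|\tilde{u},y_1)>0$ and $P_{X|\tilde{U},Y}(x|\tilde{u},y_2)>0$, hence $H(X|Y,\tilde{U})>0$, so the EFRL variable already beats $L_1^\epsilon$ through the slack term $H(X|Y,U)=(1-\alpha)H(X|Y,\tilde{U})$ in \eqref{key}. You instead time-share $\tilde{U}$ against $U=Y$ and compare the resulting utility $H(Y)-\bigl(1-\epsilon/I(X;Y)\bigr)H(X)$ with $L_1^\epsilon$, extracting $I(X;Y)\ge H(X)$ and hence $H(X|Y)=0$. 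For $\epsilon>0$ this is correct and arguably cleaner, since it uses nothing about the FRL variable beyond $I(X;\tilde{U})=0$ and $H(Y|\tilde{U},X)=0$; in fact your preliminary step deriving $H(X|\tilde{U},Y)=0$ is not even needed, because \eqref{key} already gives $I(Y;\tilde{U})\ge H(Y)-H(X)$, which suffices for the comparison.

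There is, however, a genuine gap at $\epsilon=0$, which the hypothesis $0\le\epsilon<I(X;Y)$ includes and which you flag but do not close. Your time-sharing weight $\lambda=\epsilon/I(X;Y)$ vanishes there, the comparison degenerates to $0\ge 0$, and no contradiction is produced, so your argument says nothing about whether $h_0(P_{XY})=H(Y)-H(X)$ forces $H(X|Y)=0$. This case cannot be absorbed into the $\epsilon>0$ argument by continuity; it is exactly the setting of \cite[Th.~6]{kostala}, and the proof the paper relies on goes through the structural positivity property of the FRL construction described above, which yields $H(X|Y,\tilde{U})>0$ and hence $I(Y;\tilde{U})>H(Y)-H(X)$ whenever $X$ is not a deterministic function of $Y$. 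To complete your proof you need to supply that argument (or an equivalent one) for $\epsilon=0$; everything else checks out.
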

\begin{proof}
	$L_3^{\epsilon}$ can be derived by using \cite[Remark~2]{shahab}, since we have $h_{\epsilon}(P_{XY})\geq g_{\epsilon}(P_{XY})\geq L_3^{\epsilon}$. For deriving $L_1$, let $U$ be produced by EFRL. Thus, using the construction of $U$ as in Lemma~\ref{lemma3} we have $I(X,U)=\epsilon$ and $H(Y|X,U)=0$. Then, using \eqref{key} we obtain
	\begin{align*}
	h_{\epsilon}(P_{XY})&\geq I(U;Y)\\&=I(X;U)\!+\!H(Y|X)\!-\!H(Y|U,X)\!-\!I(X;U|Y)\\&=\epsilon+H(Y|X)-H(X|Y)+H(X|Y,U)\\ &\geq\epsilon+H(Y|X)-H(X|Y)=L_1.
	\end{align*} For deriving $L_2^{\epsilon}$, let $U$ be produced by ESFRL. Thus, by using the construction of $U$ as in Lemma~\ref{lemma4} we have $I(X,U)=\epsilon$, $H(Y|X,U)=0$ and $I(X;U|Y)\leq \alpha H(X|Y)+(1-\alpha)\left(\log(I(X;Y)+1)+4\right)$. Then, by using \eqref{key} we obtain
	\begin{align*}
	h_{\epsilon}(P_{XY})&\geq I(U;Y)\\&=I(X;U)\!+\!H(Y|X)\!-\!H(Y|U,X)\!-\!I(X;U|Y)\\&=\epsilon+H(Y|X)-I(X;U|Y)\\&\geq\epsilon+H(Y|X)-\alpha H(X|Y)\\&\ +(1-\alpha)\left(\log(I(X;Y)+1)+4\right)=L_2^{\epsilon}.
\end{align*}
Let $X$ be a deterministic function of $Y$. In this case, set $\epsilon=0$ in $L_1^{\epsilon}$ so that we obtain $h_0(P_{XY})\geq H(Y|X)$. Furthermore, by using \eqref{key} we have $h_0(P_{XY})\leq H(Y|X)$. Moreover, since $X$ is a deterministic function of $Y$, the Markov chain $X-Y-U$ holds and we have $h_0(P_{XY})=g_0(P_{XY})=H(Y|X)$. Therefore, $L_3^{\epsilon}$ can be rewritten as
\begin{align*}
L_3^{\epsilon}&=\epsilon\frac{H(Y)}{H(X)}+H(Y|X)\left(\frac{H(X)-\epsilon}{H(X)}\right),\\&=\epsilon\frac{H(Y)}{H(X)}+(H(Y)-H(X))\left(\frac{H(X)-\epsilon}{H(X)}\right),\\&=H(Y)-H(X)+\epsilon.
\end{align*}
$L_2^{\epsilon}$ can be rewritten as follows
\begin{align*}
L_2^{\epsilon}=H(Y|X)+\epsilon-(1-\frac{\epsilon}{H(X)})(\log(H(X)+1)+4).
\end{align*}
Thus, if $H(X|Y)=0$, then $L_1^{\epsilon}=L_3^{\epsilon}\geq L_2^{\epsilon}$. Now we show that $L_1^{\epsilon}=L_3^{\epsilon}$ is tight. By using \eqref{key} we have
\begin{align*}
I(U;Y) &\stackrel{(a)}{=} I(X;U)+H(Y|X)-H(Y|U,X),\\&\leq \epsilon+H(Y|X)=L_1^{\epsilon}=L_3^{\epsilon}.
\end{align*} 
where (a) follows since $X$ is deterministic function of $Y$ which leads to $I(X;U|Y)=0$. Thus, if $H(X|Y)=0$, the lower bound in \eqref{th2} is tight.
Now suppose that the lower bound $L_1^{\epsilon}$ is tight and $X$ is not a deterministic function of $Y$. Let $\tilde{U}$ be produced by FRL using the construction of \cite[Lemma~1]{kostala}. As argued in the proof of \cite[Th.~6]{kostala}, there exists $x\in\cal X$ and $y_1,y_2\in\cal Y$ such that $P_{X|\tilde{U},Y}(x|\tilde{u},y_1)>0$ and $P_{X|\tilde{U},Y}(x|\tilde{u},y_2)>0$ which results in $H(X|Y,\tilde{U})>0$. Let $U=(\tilde{U},W)$ where $W$ is defined in Lemma~\ref{lemma3}. For such $U$ we have
\begin{align*}
H(X|Y,U)&=(1-\alpha)H(X|Y,\tilde{U})>0,\\
\Rightarrow I(U;Y)&\stackrel{(a)}{=}\epsilon+H(Y|X)-H(X|Y)+H(X|Y,U)\\
&>\epsilon+H(Y|X)-H(X|Y).
\end{align*}
where in (a) we used the fact that such $U$ satisfies $I(X;U)=\epsilon$ and $H(Y|X,U)=0$. The last line is a contradiction with tightness of $L_1^{\epsilon}$, since we can achieve larger values, thus, $X$ needs to be a deterministic function of $Y$.
	\end{proof}
In next corollary we let $\epsilon=0$ and derive lower bound on $h_0(P_{XY})$.
\begin{corollary}\label{kooni}
	Let $\epsilon=0$. Then, for any pair of RVs $(X,Y)$ distributed according to $P_{XY}$ supported on alphabets $\mathcal{X}$ and $\mathcal{Y}$ we have
	\begin{align*}
	h_{0}(P_{XY})\geq \max\{L^0_1,L^0_2\},
	\end{align*}
	where 
	\begin{align*}
	L^0_1 &= H(Y|X)-H(X|Y)=H(Y)-H(X),\\
	L^0_2 &= H(Y|X) -\left( \log(I(X;Y)+1)+4 \right).\\
	\end{align*}
\end{corollary}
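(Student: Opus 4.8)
The plan is to read the corollary off Theorem~\ref{th.1} by setting $\epsilon=0$. First I would note that when $\epsilon=0$ the parameter $\alpha=\frac{\epsilon}{H(X)}$ equals $0$ (here $H(X)>0$, since otherwise $I(X;Y)=0$, a degenerate case treated separately below). Substituting $\epsilon=0$ and $\alpha=0$ into the three bounds of Theorem~\ref{th.1} gives
\begin{align*}
L_1^0 &= H(Y|X)-H(X|Y)=H(Y)-H(X),\\
L_2^0 &= H(Y|X)-\big(\log(I(X;Y)+1)+4\big),\\
L_3^0 &= g_0(P_{XY}),
\end{align*}
so Theorem~\ref{th.1} immediately yields $h_0(P_{XY})\geq\max\{L_1^0,L_2^0,L_3^0\}\geq\max\{L_1^0,L_2^0\}$, which is the claim. (Retaining $L_3^0=g_0(P_{XY})$ would give a marginally stronger statement, but the corollary records only the two closed-form bounds.)

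Alternatively, for a more self-contained argument one can rerun the two relevant steps of the proof of Theorem~\ref{th.1} directly at $\epsilon=0$. For $L_1^0$: the RV $U$ from EFRL (Lemma~\ref{lemma3}) reduces to the one from the ordinary Functional Representation Lemma, so $I(X;U)=0$ and $H(Y|X,U)=0$; plugging these into the identity \eqref{key} gives $h_0(P_{XY})\geq I(U;Y)=H(Y|X)-H(X|Y)+H(X|Y,U)\geq H(Y)-H(X)$. For $L_2^0$: taking $U$ from ESFRL (Lemma~\ref{lemma4}) at $\epsilon=0$ — equivalently, the Strong Functional Representation Lemma — gives $I(X;U|Y)\leq\log(I(X;Y)+1)+4$, and \eqref{key} then yields $h_0(P_{XY})\geq H(Y|X)-I(X;U|Y)\geq H(Y|X)-\big(\log(I(X;Y)+1)+4\big)$.

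The only point needing separate attention is the degenerate case $I(X;Y)=0$, where Theorem~\ref{th.1} does not formally apply because its hypothesis requires $\epsilon<I(X;Y)$: there $X$ and $Y$ are independent, $h_0(P_{XY})=H(Y)$ is attained by $U=Y$, and both $L_1^0=H(Y)-H(X)\leq H(Y)$ and $L_2^0=H(Y)-4\leq H(Y)$ hold trivially. Beyond this edge case there is essentially no obstacle; the corollary is a direct specialization of Theorem~\ref{th.1}, and the only mild care needed is verifying that the $\alpha$-dependent terms in $L_1^\epsilon,L_2^\epsilon$ and the $\epsilon$-weighting in $L_3^\epsilon$ collapse as stated.
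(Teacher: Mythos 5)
Your proposal is correct and matches the paper's approach: the paper obtains the corollary exactly by setting $\epsilon=0$ (hence $\alpha=0$) in Theorem~\ref{th.1}, which is your first argument. Your additional self-contained rederivation via FRL/SFRL and your treatment of the degenerate case $I(X;Y)=0$ (which the paper's hypothesis $\epsilon<I(X;Y)$ formally excludes) are sound refinements but not a different method.
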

Note that the lower bound $L^0_1$ has been derived in \cite[Th.~6]{kostala}, while the lower bound $L^0_2$ is a new lower bound.
In the next two examples we compare the bounds $L_1^{\epsilon}$, $L_2^{\epsilon}$ and $L_3^{\epsilon}$ in special cases where $I(X;Y)=0$ and $H(X|Y)=0$.  
\begin{example}
	Let $X$ and $Y$ be independent. Then, we have
	\begin{align*}
	L_1^{\epsilon}&=H(Y)-H(X)+\epsilon,\\
	L_2^{\epsilon}&=H(Y)-\frac{\epsilon}{H(X)}H(X)+\epsilon-4(1-\frac{\epsilon}{H(X)}),\\
	&=H(Y)-4(1-\frac{\epsilon}{H(X)}).
	\end{align*}
	Thus,
	\begin{align*}
	L_2^{\epsilon}-L_1^{\epsilon}&=H(X)-4+\epsilon(\frac{4}{H(X)}-1),\\
	&=(H(X)-4)(1-\frac{\epsilon}{H(X)}).
	\end{align*}
  Consequently, for independent $X$ and $Y$ if $H(X)>4$, then $L_2^{\epsilon}>L_1^{\epsilon}$, i.e., the second lower bound is dominant and $h_{\epsilon}(P_{X}P_{Y})\geq L_2^{\epsilon}$. 
\end{example}
\begin{example}
	Let $X$ be a deterministic function of $Y$. As we have shown in Theorem~\ref{th.1}, if $H(X|Y)=0$, then
	\begin{align*}
	L_1^{\epsilon}&=L_3^{\epsilon}=H(Y|X)+\epsilon\\&\geq H(Y|X)+\epsilon-(1-\frac{\epsilon}{H(X)})(\log(H(X)+1)+4)\\&=L_2^{\epsilon}.
	\end{align*}
	Therefore, $L_1^{\epsilon}$ and $L_3^{\epsilon}$ become dominants.
\end{example}
In the next lemma we find a lower bound for $\sup_{U} H(U)$ where $U$ satisfies the leakage constraint $I(X;U)\leq\epsilon$, the bounded cardinality  and $ H(Y|U,X)=0$.
\begin{lemma}\label{koonimooni}
	For any pair of RVs $(X,Y)$ distributed according to $P_{XY}$ supported on alphabets $\mathcal{X}$ and $\mathcal{Y}$, then if $U$ satisfies $I(X;U)\leq \epsilon$, $H(Y|X,U)=0$ and $|\mathcal{U}|\leq \left[|\mathcal{X}|(|\mathcal{Y}|-1)+1\right]\left[|\mathcal{X}|+1\right]$, we have
	\begin{align*}
	\sup_U H(U)\!&\geq\! \alpha H(Y|X)\!+\!(1-\alpha)(\max_{x\in\mathcal{X}}H(Y|X=x))\!\\&+\!h(\alpha)\!+\!\epsilon \geq H(Y|X)\!+\!h(\alpha)\!+\!\epsilon,
	\end{align*}
	where $\alpha=\frac{\epsilon}{H(X)}$ and $h(\cdot)$ corresponds to the binary entropy.
\end{lemma}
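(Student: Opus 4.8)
The plan is to lower-bound $\sup_U H(U)$ by exhibiting a single feasible $U$ and estimating its entropy from below. The natural choice is $U=(\tilde U,W)$ from the proof of Lemma~\ref{lemma3}: $\tilde U$ is produced by the FRL construction of \cite[Lemma~1]{kostala}, and $W$ equals $X$ with probability $\alpha=\frac{\epsilon}{H(X)}$ and a symbol $c$ outside the supports of $X$ and $Y$ with probability $1-\alpha$, the coin deciding the two cases being drawn independently of everything else. By Lemma~\ref{lemma3}, this $U$ satisfies $I(X;U)=\epsilon\le\epsilon$, $H(Y|X,U)=0$, and $|\mathcal{U}|\le[|\mathcal{X}|(|\mathcal{Y}|-1)+1][|\mathcal{X}|+1]$, so it is feasible for the supremum.

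Next I would write $H(U)=H(W)+H(\tilde U\mid W)$. The first term is $H(W)=h(\alpha)+\alpha H(X)=h(\alpha)+\epsilon$, exactly as computed in the proof of the preceding lemma. For the second term, conditioning on the value of $W$ and using that the auxiliary coin is independent of $(\tilde U,X)$ gives
\begin{align*}
H(\tilde U\mid W)=(1-\alpha)H(\tilde U)+\sum_{x\in\mathcal{X}}\alpha P_X(x)\,H(\tilde U\mid X=x).
\end{align*}
Now I would invoke the per-letter inequality $H(\tilde U\mid X=x)\ge H(Y\mid X=x)$, obtained by expanding $H(\tilde U,Y\mid X=x)$ in two ways and using $H(Y\mid X,\tilde U)=0$ (so $H(Y\mid\tilde U,X=x)=0$), as in the generalization of \cite[Lemma~3]{kostala}; the same argument together with $H(\tilde U)\ge H(\tilde U\mid X=x)$ yields $H(\tilde U)\ge\max_{x\in\mathcal{X}}H(Y\mid X=x)$. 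Substituting these bounds gives $H(\tilde U\mid W)\ge(1-\alpha)\max_{x}H(Y\mid X=x)+\alpha H(Y\mid X)$, and adding $H(W)$ produces the first claimed inequality. The second inequality follows at once because $\max_x H(Y\mid X=x)\ge H(Y\mid X)$, so the convex combination $\alpha H(Y\mid X)+(1-\alpha)\max_x H(Y\mid X=x)$ is no smaller than $H(Y\mid X)$.

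I do not anticipate a real obstacle. The only points needing care are checking that the exhibited $U$ meets the prescribed cardinality bound — which it does, since that bound is precisely what Lemma~\ref{lemma3} delivers — and justifying the conditional-entropy identities $H(\tilde U\mid W=c)=H(\tilde U)$ and $H(\tilde U\mid W=x)=H(\tilde U\mid X=x)$ from the independence of the auxiliary coin together with the FRL independence $I(\tilde U;X)=0$. If one wished, one could even note that $H(\tilde U\mid X=x)=H(\tilde U)$ for every $x$, which would give a bound at least as strong as the stated one, but the weaker convex-combination form is all that is claimed.
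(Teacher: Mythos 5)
Your proof is correct, and it reaches the bound by a more direct route than the paper. Both arguments use the same witness $U=(\tilde U,W)$ from the proof of Lemma~\ref{lemma3}, and both ultimately rest on the per-letter inequality $H(\tilde U\mid X=x)\ge H(Y\mid X=x)$ (equivalently, \cite[Lemma~3]{kostala}). The difference is in how $H(U)$ is expanded. The paper first rewrites $H(U)$ through the identity \eqref{key} as $H(U|Y)+H(Y|X)-H(X|Y)+\epsilon+H(X|Y,U)$, then decomposes $H(U|Y)$ by conditioning on $W$, computes $H(W|Y)=h(\alpha)+\alpha H(X|Y)$, and injects a lower bound on $H(\tilde U|Y)$ obtained from \eqref{key} and \cite[Lemma~3]{kostala}; several $H(X|Y,\tilde U)$ terms must then cancel. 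You instead use the plain chain rule $H(U)=H(W)+H(\tilde U\mid W)$, evaluate $H(W)=h(\alpha)+\epsilon$, and condition $H(\tilde U\mid W)$ on the value of $W$ directly, which immediately produces the convex combination $\alpha H(Y|X)+(1-\alpha)\max_{x}H(Y|X=x)$; your justifications of $H(\tilde U\mid W=c)=H(\tilde U)$ and $H(\tilde U\mid W=x)=H(\tilde U\mid X=x)$ via the independent coin are sound because $c$ lies outside the support of $X$. This avoids the detour through $Y$-conditioning and the attendant cancellations. Your closing remark is also right: since $I(\tilde U;X)=0$ forces $P_{\tilde U|X=x}=P_{\tilde U}$, one in fact has $H(\tilde U\mid W)=H(\tilde U)\ge\max_{x}H(Y|X=x)$, giving the slightly stronger conclusion $\sup_U H(U)\ge \max_{x}H(Y|X=x)+h(\alpha)+\epsilon$; the stated convex-combination form is what the paper records, and your argument delivers it with the feasibility of the witness supplied exactly by Lemma~\ref{lemma3}.
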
 
\begin{proof}
Let $U=(\tilde{U},W)$ where $W=\begin{cases}
X,\ \text{w.p}.\ \alpha\\
c,\ \ \text{w.p.}\ 1-\alpha
\end{cases}$, and $c$ is a constant which does not belong to the support of $X$, $Y$ and $\tilde{U}$, furthermore, $\tilde{U}$ is produced by FRL. Using \eqref{key} and \cite[Lemma~3]{kostala} we have
\begin{align}
H(\tilde{U}|Y)&=H(\tilde{U})-H(Y|X)+I(X;\tilde{U}|Y)\nonumber\\ &\stackrel{(a)}{\geq} \max_{x\in\mathcal{X}} H(Y|X=x)-H(Y|X)\nonumber\\&\ +H(X|Y)-H(X|Y,\tilde{U}),\label{toole}
\end{align} 	
where (a) follows from 
\cite[Lemma~3]{kostala}. Furthermore, in the first line we used $I(X;\tilde{U})=0$ and $H(Y|\tilde{U},X)=0$. Using \eqref{key} we obtain
\begin{align*}
H(U)&\stackrel{(a)}{=}\!H(U|Y)\!+\!H(Y|X)\!-\!H(X|Y)\!+\!\epsilon\!+\!H(X|Y,U),\\
&\stackrel{(b)}{=}H(W|Y)+\alpha H(\tilde{U}|Y,X)+(1-\alpha)H(\tilde{U}|Y)\\  & \ \ \ +H(Y|X)\!-\!H(X|Y)\!+\!\epsilon+(1\!-\!\alpha)H(X|Y,\tilde{U}),\\
&\stackrel{(c)}{=}\!(\alpha-1)H(X|Y)\!+\!h(\alpha)\!+\alpha H(\tilde{U}|Y,X)+\!\epsilon\\& \ \ +\!(1-\alpha)H(\tilde{U}|Y)\!+\!H(Y|X)\!+\!(1\!-\!\alpha)H(X|Y,\tilde{U}), \\
& \stackrel{(d)}{\geq}  (\alpha-1)H(X|Y)+h(\alpha)\!+\alpha H(\tilde{U}|Y,X)\\ \ \ \ &+\!\!(1-\alpha) (\max_{x\in\mathcal{X}}H(Y|X=x)-H(Y|X)+\!H(X|Y) \\& -\!H(X|Y,\tilde{U}))+H(Y|X)\!+\!\epsilon\!+\!(1-\alpha)H(X|Y,\tilde{U})\\
&=\alpha H(Y|X)+(1-\alpha)(\max_{x\in\mathcal{X}}H(Y|X=x))\\ &\ \ \ +h(\alpha)+\epsilon.
\end{align*}
In step (a) we used $I(U;X)=\epsilon$ and $H(Y|X,U)=0$ and in step (b) we used $H(U|Y)=H(W|Y)+H(\tilde{U}|Y,W)=H(W|Y)+\alpha H(\tilde{U}|Y,X)+(1-\alpha)H(\tilde{U}|Y)$ and $H(X|Y,U)=H(X|Y,\tilde{U},W)=(1-\alpha)H(X|Y,\tilde{U})$. In step (c) we used the fact that $P_{W|Y}=\begin{cases}
\alpha P_{X|Y}(x|\cdot)\ &\text{if}\ w=x,\\
1-\alpha \ &\text{if} \ w=c,
\end{cases}$
since $P_{W|Y}(w=x|\cdot)=\frac{P_{W,Y}(w=x,\cdot)}{P_Y(\cdot)}=\frac{P_{Y|W}(\cdot|w=x)P_W(w=x)}{P_Y(\cdot)}=\frac{P_{Y|X}(\cdot|x)\alpha P_X(x)}{P_Y(\cdot)}=\alpha P_{X|Y}(x|\cdot)$, furthermore, $P_{W|Y}(w=c|\cdot)=1-\alpha$. Hence, after some calculation we obtain $H(W|Y)=h(\alpha)+\alpha H(X|Y)$. Finally, step (d) follows from \eqref{toole}.
\end{proof}
\begin{remark}
	The constraint $|\mathcal{U}|\leq \left[|\mathcal{X}|(|\mathcal{Y}|-1)+1\right]\left[|\mathcal{X}|+1\right]$ in Lemma~\ref{koonimooni} guarantees that $\sup_U H(U)<\infty$.
\end{remark}
In the next lemma we find an upper bound for $h_{\epsilon}(P_{XY})$. 
\begin{lemma}\label{goh}
	For any $0\leq\epsilon< I(X;Y)$ and pair of RVs $(X,Y)$ distributed according to $P_{XY}$ supported on alphabets $\mathcal{X}$ and $\mathcal{Y}$ we have
	\begin{align*}
	g_{\epsilon}(P_{XY})\leq h_{\epsilon}(P_{XY})\leq H(Y|X)+\epsilon.
	\end{align*}
\end{lemma}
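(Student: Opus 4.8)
The plan is to read off the bound directly from the identity \eqref{key}, using only non-negativity of entropy and conditional mutual information. The first inequality $g_{\epsilon}(P_{XY})\leq h_{\epsilon}(P_{XY})$ has already been recorded (the feasible set of \eqref{main2} is contained in that of \eqref{main1}), so the work is entirely in establishing $h_{\epsilon}(P_{XY})\leq H(Y|X)+\epsilon$.

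First I would fix an arbitrary kernel $P_{U|Y,X}$ that is feasible for \eqref{main1}, i.e. one for which the induced $U$ satisfies $I(U;X)\leq\epsilon$. Applying the decomposition \eqref{key}, which is just the chain rule for mutual information and therefore valid for any such kernel, gives $I(Y;U)=I(X;U)+H(Y|X)-H(Y|U,X)-I(X;U|Y)$. Now bound the three remaining terms: $I(X;U)\leq\epsilon$ by feasibility, $H(Y|U,X)\geq 0$, and $I(X;U|Y)\geq 0$. Hence $I(Y;U)\leq\epsilon+H(Y|X)$ for every feasible $U$, and taking the supremum over all feasible $P_{U|Y,X}$ yields $h_{\epsilon}(P_{XY})\leq H(Y|X)+\epsilon$. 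Chaining with $g_{\epsilon}(P_{XY})\leq h_{\epsilon}(P_{XY})$ completes the argument.

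There is essentially no obstacle here; the only point requiring a moment's care is that all quantities in \eqref{key} be well defined and the rearrangement legitimate. This is guaranteed by the standing assumptions: $|\mathcal{X}|$ is finite and $\epsilon<I(X;Y)\leq H(Y)<\infty$ need not even be invoked beyond ensuring $H(Y|X)$ and $I(X;U)$ are finite, so that moving terms across the equality in \eqref{key} causes no $\infty-\infty$ issue. If one wishes to be fully rigorous in the countably infinite $\mathcal{Y}$ case, I would note that $I(Y;U)\leq H(Y)$ is finite for any feasible $U$ and that the identity \eqref{key} can be read as $I(Y;U)+H(Y|U,X)+I(X;U|Y)=I(X;U)+H(Y|X)$, all terms non-negative, from which $I(Y;U)\leq I(X;U)+H(Y|X)\leq\epsilon+H(Y|X)$ follows without any subtraction.
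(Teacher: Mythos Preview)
Your proof is correct and follows essentially the same route as the paper: apply the identity \eqref{key}, drop the non-negative terms $H(Y|U,X)$ and $I(X;U|Y)$, use the feasibility constraint $I(X;U)\leq\epsilon$, and take the supremum. Your additional remarks on well-definedness in the countably infinite case are more careful than the paper's one-line argument but add nothing essential.
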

\begin{proof}
	By using \eqref{key} we have
	\begin{align*}
	h_{\epsilon}(P_{XY})\leq H(Y|X)+\sup I(U;X)\leq H(Y|X)+\epsilon. 
	\end{align*}
\end{proof}
\begin{corollary}
	If $X$ is a deterministic function of $Y$, then by using Theorem~\ref{th.1} and Lemma~\ref{goh} we have
	\begin{align*}
	g_{\epsilon}(P_{XY})=h_{\epsilon}(P_{XY})=H(Y|X)+\epsilon,
	\end{align*}
	since in this case the Markov chain $X-Y-U$ holds.
\end{corollary}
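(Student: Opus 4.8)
The plan is to squeeze both $g_{\epsilon}(P_{XY})$ and $h_{\epsilon}(P_{XY})$ between $H(Y|X)+\epsilon$ from above and from below, using the two cited results. The upper bound costs nothing: Lemma~\ref{goh} already states $g_{\epsilon}(P_{XY})\le h_{\epsilon}(P_{XY})\le H(Y|X)+\epsilon$ for every $P_{XY}$, in particular here.

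For the matching lower bound I would invoke Theorem~\ref{th.1}. Under the hypothesis that $X$ is a deterministic function of $Y$ we have $H(X|Y)=0$, so the first bound there reduces, $L_1^{\epsilon}=H(Y|X)-H(X|Y)+\epsilon=H(Y|X)+\epsilon$, and hence $h_{\epsilon}(P_{XY})\ge \max\{L_1^{\epsilon},L_2^{\epsilon},L_3^{\epsilon}\}\ge H(Y|X)+\epsilon$. Combined with the upper bound from Lemma~\ref{goh} this pins down $h_{\epsilon}(P_{XY})=H(Y|X)+\epsilon$.

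It remains to show $g_{\epsilon}(P_{XY})=h_{\epsilon}(P_{XY})$. The clean argument is that when $X=f(Y)$ for some deterministic map $f$, any kernel $P_{U|Y,X}$ feasible in the definition of $h_{\epsilon}$ induces exactly the kernel $P_{U|Y}(u\mid y)=P_{U|Y,X}(u\mid y,f(y))$, so conditioning on $X$ in addition to $Y$ is vacuous and the Markov chain $X-Y-U$ holds automatically; therefore the feasible set of $h_{\epsilon}$ coincides with that of $g_{\epsilon}$ and the two suprema are equal. (Alternatively, one may observe that the derivation of $L_3^{\epsilon}$ in Theorem~\ref{th.1} in fact gives $g_{\epsilon}(P_{XY})\ge L_3^{\epsilon}$, and substituting $I(X;Y)=H(X)$, $g_0(P_{XY})=H(Y|X)$ and $H(Y)=H(Y|X)+H(X)$ shows $L_3^{\epsilon}=H(Y|X)+\epsilon$, again sandwiching $g_{\epsilon}$.) Either way, $g_{\epsilon}(P_{XY})=h_{\epsilon}(P_{XY})=H(Y|X)+\epsilon$.

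I do not anticipate a real obstacle: the only point that warrants a sentence of care is the remark that the Markov-chain restriction is empty when $X$ is a deterministic function of $Y$ (or, in the alternative route, the short arithmetic verifying that $L_3^{\epsilon}$ collapses to $H(Y|X)+\epsilon$); everything else is a direct appeal to Theorem~\ref{th.1} and Lemma~\ref{goh}.
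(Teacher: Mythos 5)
Your proposal is correct and follows essentially the same route as the paper: the upper bound comes from Lemma~\ref{goh}, the lower bound from $L_1^{\epsilon}$ (equivalently $L_3^{\epsilon}$) in Theorem~\ref{th.1} specialized to $H(X|Y)=0$, and the equality $g_{\epsilon}=h_{\epsilon}$ from the observation that conditioning on $X$ is vacuous when $X=f(Y)$, so the Markov chain $X-Y-U$ holds for every feasible kernel. The paper's own justification is exactly this (stated more tersely), so no further comment is needed.
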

\begin{lemma}\label{kir}
	Let $\bar{U}$ be an optimizer of $h_{\epsilon}(P_{XY})$. We have
	\begin{align*}
	H(Y|X,\bar{U})=0.
	\end{align*}
\end{lemma}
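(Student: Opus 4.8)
The plan is to argue by contradiction using the key identity \eqref{key}, which for any feasible $U$ reads
\begin{align*}
I(Y;U)=I(X;U)+H(Y|X)-H(Y|U,X)-I(X;U|Y).
\end{align*}
Let $\bar U$ be an optimizer of $h_{\epsilon}(P_{XY})$ and suppose, for contradiction, that $H(Y|X,\bar U)>0$. My goal would then be to exhibit another feasible $U'$ with strictly larger $I(Y;U')$, contradicting optimality. The natural candidate is $U'=(\bar U,V)$ where $V$ is an extra random variable that resolves the residual uncertainty of $Y$ given $(X,\bar U)$, constructed in the spirit of the Functional Representation Lemma applied conditionally on $(X,\bar U)$: for each value $(x,\bar u)$, choose $V$ so that $Y$ becomes a deterministic function of $(X,\bar U,V)$ while $V$ is independent of $X$ conditioned on $(X,\bar U)$ — but one must be careful here, since the FRL gives independence of $V$ from $X$ given $\bar U$, not the joint independence structure needed to keep the leakage from increasing.

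The main obstacle is precisely the leakage constraint: adding $V$ may push $I(X;U')$ above $\epsilon$. To handle this I would instead not aim for full determinism in one step, but show that if $H(Y|X,\bar U)>0$ then there is \emph{slack} somewhere that can be exploited. Concretely, from \eqref{key} applied to $\bar U$,
\begin{align*}
h_{\epsilon}(P_{XY})=I(Y;\bar U)=I(X;\bar U)+H(Y|X)-H(Y|\bar U,X)-I(X;\bar U|Y)\le H(Y|X)+\epsilon-H(Y|\bar U,X),
\end{align*}
so if $H(Y|X,\bar U)>0$ then $I(Y;\bar U)<H(Y|X)+\epsilon$. By Lemma~\ref{goh} combined with Theorem~2 / Lemma~\ref{lemma3}, if $H(Y|X)>0$ one can actually construct (via EFRL) a feasible $U$ with $H(Y|X,U)=0$, $I(X;U)=\epsilon$, and $I(Y;U)=\epsilon+H(Y|X)-H(X|Y)+H(X|Y,U)$; more to the point, I would argue that any optimizer must meet the upper bound $H(Y|X)+\epsilon$ in the term $H(Y|X)-H(Y|U,X)$ as tightly as the constraint allows, and a strictly positive $H(Y|U,X)$ is never forced. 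If $H(Y|X)=0$ the claim is immediate since then $H(Y|X,\bar U)\le H(Y|X)=0$.

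So the cleanest route: first dispose of the case $H(Y|X)=0$ trivially. For $H(Y|X)>0$, suppose $H(Y|X,\bar U)=\delta>0$. Apply the Functional Representation Lemma to the pair $(X,Y)$ \emph{conditioned on each realization $\bar u$ of $\bar U$} to obtain $V_{\bar u}$ with $Y$ a deterministic function of $(X,\bar U,V)$ and with $I(X;V|\bar U)=0$; set $U'=(\bar U,V)$. Then $H(Y|X,U')=0$, and $I(X;U')=I(X;\bar U)+I(X;V|\bar U)=I(X;\bar U)\le\epsilon$, so $U'$ is feasible. Plugging into \eqref{key}, $I(Y;U')=I(X;U')+H(Y|X)-0-I(X;U'|Y)$, whereas $I(Y;\bar U)=I(X;\bar U)+H(Y|X)-\delta-I(X;\bar U|Y)$. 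The difference is $I(Y;U')-I(Y;\bar U)=\delta-\big(I(X;U'|Y)-I(X;\bar U|Y)\big)=\delta - I(X;V|\bar U,Y)$, and the residual term $I(X;V|\bar U,Y)$ can be made zero (or at least strictly less than $\delta$) by a careful choice of $V$ — this bookkeeping with the conditional FRL is the step I expect to be delicate, and is where I would spend most of the effort to nail down that $V$ can be chosen independent of $X$ given $(\bar U,Y)$ as well, e.g.\ by noting that conditioned on $Y$ the residual variable $V$ carries no information about $X$ by construction. Concluding, $I(Y;U')>I(Y;\bar U)=h_{\epsilon}(P_{XY})$, contradicting optimality; hence $H(Y|X,\bar U)=0$.
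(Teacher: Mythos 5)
Your overall strategy is the same as the paper's: argue by contradiction, augment the optimizer $\bar U$ with an FRL-type variable that makes $Y$ a function of $(X,\bar U,\cdot)$, check that the leakage does not increase, and conclude that the utility strictly increases. The feasibility half of your argument is fine: with $I(X;V\mid\bar U)=0$ you correctly get $I(X;\bar U,V)=I(X;\bar U)\le\epsilon$.

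The gap is in the last step, and it is not mere bookkeeping. From \eqref{key} one has the identity $I(X;V\mid \bar U,Y)=I(X,Y;V\mid\bar U)-I(Y;V\mid\bar U)=H(Y\mid X,\bar U)-I(Y;V\mid\bar U)=\delta-I(Y;V\mid\bar U)$, so your requirement ``$I(X;V\mid\bar U,Y)<\delta$'' is \emph{literally equivalent} to the conclusion $I(Y;V\mid\bar U)>0$ that you are trying to reach; you have restated the goal, not proved it. Worse, the justification you sketch --- that ``conditioned on $Y$ the residual variable $V$ carries no information about $X$ by construction'' --- is false for the FRL: $I(X;V\mid Y)$ is exactly the excess functional information $\psi(X\to Y)$ of Lemma~\ref{haroomi}, which is generically strictly positive (making it zero would force $I(Y;V\mid\bar U)=\delta$, i.e.\ $V$ reveals \emph{all} of $H(Y\mid X,\bar U)$ while staying conditionally independent of $X$, which is not achievable in general). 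The paper closes this hole differently: it applies the FRL to the pair $\bigl((X,\bar U),Y\bigr)$ so that the new variable $U'$ is independent of $(X,\bar U)$ \emph{jointly}, which yields $I(Y;\bar U,U')\ge I(Y;\bar U)+I(Y;U')$ by the chain rule, and then invokes \cite[Th.~5]{kostala} (positivity of the perfect-privacy function when $H(Y\mid X,\bar U)>0$) to get $I(Y;U')>0$. You need an analogous external input --- e.g.\ \cite[Th.~5]{kostala} applied to $P_{XY\mid\bar U=\bar u}$ for some $\bar u$ with $H(Y\mid X,\bar U=\bar u)>0$ --- to make your contradiction go through; as written, the proof is incomplete.
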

\begin{proof}
	The proof is similar to \cite[Lemma~5]{kostala}. Let $\bar{U}$ be an optimizer of $h_{\epsilon}(P_{XY})$ and assume that $H(Y|X,\bar{U})>0$. Consequently, we have $I(X;\bar{U})\leq \epsilon.$ Let $U'$ be founded by FRL  with $(X,\bar{U})$ instead of $X$ in Lemma~\ref{lemma1} and same $Y$, that is $I(U';X,\bar{U})=0$ and $H(Y|X,\bar{U},U')=0$. Using \cite[Th.~5]{kostala} we have
	\begin{align*}
	I(Y;U')>0,
	\end{align*}
	since we assumed $H(Y|X,\bar{U})>0$. Let $U=(\bar{U},U')$ and we first show that $U$ satisfies $I(X;U)\leq \epsilon$. We have
	\begin{align*}
	I(X;U)&=I(X;\bar{U},U')=I(X;\bar{U})+I(X;U'|\bar{U}),\\
	&=I(X;\bar{U})+H(U'|\bar{U})-H(U'|\bar{U},X),\\
	&=I(X;\bar{U})+H(U')-H(U')\leq \epsilon,
	\end{align*}
	where in last line we used the fact that $U'$ is independent of the pair $(X,\bar{U})$. Finally, we show that $I(Y;U)>I(Y,\bar{U})$ which is a contradiction with optimality of $\bar{U}$. We have
	\begin{align*}
	I(Y;U)&=I(Y;\bar{U},U')=I(Y;U')+I(Y;\bar{U}|U'),\\
	&=I(Y;U')+I(Y,U';\bar{U})-I(U';\bar{U})\\
	&= I(Y;U')+I(Y,\bar{U})+I(U';\bar{U}|Y)-I(U';\bar{U})\\
	&\stackrel{(a)}{\geq} I(Y;U')+I(Y,\bar{U})\\
	&\stackrel{(b)}{>} I(Y,\bar{U}),
	\end{align*}
	where in (a) follows since $I(U';\bar{U}|Y)\geq 0$ and $I(U';\bar{U}=0$. Step (b) follows since $I(Y;U')>0$. Thus, the obtained contradiction completes the proof.
\end{proof}
In the next theorem we generalize the equivalent statements in \cite[Th.~7]{kostala} for bounded leakage between $X$ and $U$.
\begin{theorem}
	For any $\epsilon<I(X;Y)$, we have the following equivalencies
	\begin{itemize}
		\item [i.] $g_{\epsilon}(P_{XY})=H(Y|X)+\epsilon$,
		\item [ii.] $g_{\epsilon}(P_{XY})=h_{\epsilon}(P_{XY})$,
		\item [iii.] $h_{\epsilon}(P_{XY})=H(Y|X)+\epsilon$.
	\end{itemize}
\end{theorem}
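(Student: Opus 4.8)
The plan is to establish the cycle of implications $i) \Rightarrow iii) \Rightarrow ii) \Rightarrow i)$, using three ingredients already available: the sandwich $g_{\epsilon}(P_{XY}) \le h_{\epsilon}(P_{XY}) \le H(Y|X)+\epsilon$ from Lemma~\ref{goh}, the decomposition \eqref{key} of $I(Y;U)$, and Lemma~\ref{kir}, which guarantees that every optimizer $\bar{U}$ of $h_{\epsilon}(P_{XY})$ satisfies $H(Y|X,\bar{U})=0$. The implication $i) \Rightarrow iii)$ is immediate: if $g_{\epsilon}(P_{XY})=H(Y|X)+\epsilon$, the sandwich forces $h_{\epsilon}(P_{XY})=H(Y|X)+\epsilon$.

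For $iii) \Rightarrow ii)$, I would take an optimizer $\bar{U}$ of $h_{\epsilon}(P_{XY})$. By Lemma~\ref{kir} we have $H(Y|X,\bar{U})=0$, so \eqref{key} gives $h_{\epsilon}(P_{XY})=I(X;\bar{U})+H(Y|X)-I(X;\bar{U}|Y)$. Substituting $h_{\epsilon}(P_{XY})=H(Y|X)+\epsilon$ and using $I(X;\bar{U})\le\epsilon$ together with $I(X;\bar{U}|Y)\ge 0$ forces $I(X;\bar{U})=\epsilon$ and $I(X;\bar{U}|Y)=0$. The latter says $X-Y-\bar{U}$ holds, so $\bar{U}$ is feasible for the $g_{\epsilon}$ problem; hence $g_{\epsilon}(P_{XY})\ge I(Y;\bar{U})=h_{\epsilon}(P_{XY})$, and combined with $g_{\epsilon}(P_{XY})\le h_{\epsilon}(P_{XY})$ this is $ii)$.

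For $ii) \Rightarrow i)$, let $U^{*}$ be an optimizer of $g_{\epsilon}(P_{XY})$, so $X-Y-U^{*}$ and $I(X;U^{*})\le\epsilon$. Since $I(Y;U^{*})=g_{\epsilon}(P_{XY})=h_{\epsilon}(P_{XY})$ and $U^{*}$ is also feasible for $h_{\epsilon}(P_{XY})$, it is an optimizer of $h_{\epsilon}(P_{XY})$, so Lemma~\ref{kir} gives $H(Y|X,U^{*})=0$; with $I(X;U^{*}|Y)=0$ from the Markov chain, \eqref{key} yields $g_{\epsilon}(P_{XY})=I(X;U^{*})+H(Y|X)$. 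It remains to prove $I(X;U^{*})=\epsilon$. Suppose instead $I(X;U^{*})=\epsilon'<\epsilon$. I would then append to $U^{*}$ a noisy copy of $X$: let $W=X$ with probability $\beta$ and $W=c$ (a fresh symbol) with probability $1-\beta$, with $W-X-(Y,U^{*})$, and set $U=(U^{*},W)$. Using $W-X-(Y,U^{*})$ and \eqref{key} one checks that $H(Y|X,U)=0$, that $I(X;U)\le\epsilon'+\beta H(X)$, and that $I(Y;U)=I(Y;U^{*})+I(W;Y|U^{*})=g_{\epsilon}(P_{XY})+I(W;Y|U^{*})$. Moreover $I(X;Y|U^{*})=I(X;Y)-\epsilon'>0$, and since $X$ and $Y$ remain dependent given $U^{*}$, the noisy copy satisfies $I(W;Y|U^{*})>0$ for every $\beta\in(0,1]$. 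Choosing $\beta$ small enough that $\epsilon'+\beta H(X)\le\epsilon$ makes $U$ feasible for $h_{\epsilon}(P_{XY})$ while $I(Y;U)>g_{\epsilon}(P_{XY})=h_{\epsilon}(P_{XY})$, a contradiction. Hence $I(X;U^{*})=\epsilon$ and $g_{\epsilon}(P_{XY})=H(Y|X)+\epsilon$, which is $i)$.

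The \emph{main obstacle} is precisely this last step, $I(X;U^{*})=\epsilon$ in $ii)\Rightarrow i)$: in the perfect-privacy case of \cite[Th.~7]{kostala} it holds for free since $I(X;U^{*})=0$, whereas here one must rule out a $g_{\epsilon}$-optimizer that fails to exhaust the leakage budget. The augmentation by a noisy copy of $X$ handles this, the crux being the identity $I(Y;(U^{*},W))=I(Y;U^{*})+I(W;Y|U^{*})$ and the claim $I(W;Y|U^{*})>0$ whenever $I(X;Y|U^{*})>0$, which I would verify from the conditional laws of $(X,W,Y)$ given $U^{*}=u$ (if $P_{Y|X,U^{*}}(\cdot|x,u)\neq P_{Y|U^{*}}(\cdot|u)$ for some $x,u$, then $P_{Y|W,U^{*}}(\cdot|x,u)\neq P_{Y|W,U^{*}}(\cdot|c,u)$, so $W\not\perp Y\mid U^{*}$).
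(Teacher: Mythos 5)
Your proof is correct, and its skeleton matches the paper's: both rely on the sandwich of Lemma~\ref{goh}, the decomposition \eqref{key}, and Lemma~\ref{kir}, and both reduce everything to showing that an optimizer satisfying the Markov chain must exhaust the leakage budget, i.e.\ $I(X;U^{*})=\epsilon$. You close the cycle in the opposite direction (i $\Rightarrow$ iii $\Rightarrow$ ii $\Rightarrow$ i versus the paper's i $\Rightarrow$ ii $\Rightarrow$ iii $\Rightarrow$ i), which is cosmetic, but the crux step is handled genuinely differently. The paper augments $\bar{U}$ with an erased copy of $Y$ whose reveal probability is tuned to $\alpha=\frac{\epsilon-\epsilon_1}{I(X;Y|\bar{U})}$, which requires first proving $I(X;Y|\bar{U})>0$ (done there by a separate contradiction argument) and checking $\alpha<1$, but pays off with the exact values $I(X;U)=\epsilon$ and $I(Y;U)=\epsilon+H(Y|X)$. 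You instead append an erased copy of $X$ with any sufficiently small reveal probability $\beta$ and argue only a strict increase via $I(W;Y|U^{*})>0$; this avoids the exact tuning and you obtain the needed positivity $I(X;Y|U^{*})=I(X;Y)-\epsilon'>0$ by a one-line computation from the Markov chain rather than by contradiction, at the cost of not producing the optimal value constructively. Your pointwise verification that $P_{Y|W,U^{*}}(\cdot|x,u)\neq P_{Y|W,U^{*}}(\cdot|c,u)$ whenever $P_{Y|X,U^{*}}(\cdot|x,u)\neq P_{Y|U^{*}}(\cdot|u)$ is the right way to justify $I(W;Y|U^{*})>0$, and the feasibility bound $I(X;U)\le\epsilon'+\beta H(X)$ is correct since $I(X;W|U^{*})=\beta H(X|U^{*})$. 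Both arguments share the paper's implicit assumption that the suprema in \eqref{main2} and \eqref{main1} are attained.
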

\begin{proof}
	\begin{itemize}
		\item i $\Rightarrow$ ii: Using Lemma~\ref{goh} we have $H(Y|X)+\epsilon= g_{\epsilon}(P_{XY})\leq h_{\epsilon}(P_{XY}) \leq H(Y|X)+\epsilon$. Thus, $g_{\epsilon}(P_{XY})=h_{\epsilon}(P_{XY})$.
		\item ii $\Rightarrow$ iii: Let $\bar{U}$ be the optimizer of $g_{\epsilon}(P_{XY})$. Thus, the Markov chain $X-Y-\bar{U}$ holds and we have $I(X;U|Y)=0$. Furthermore, since $g_{\epsilon}(P_{XY})=h_{\epsilon}(P_{XY})$ this $\bar{U}$ achieves $h_{\epsilon}(P_{XY})$. Thus, by using Lemma~\ref{kir} we have $H(Y|\bar{U},X)=0$ and according to \eqref{key} 
		\begin{align}
		I(\bar{U};Y)&=I(X;\bar{U})\!+\!H(Y|X)\!-\!H(Y|\bar{U},X)\!\\& \ \ \ -\!I(X;\bar{U}|Y)\nonumber\\
		&=I(X;\bar{U})\!+\!H(Y|X).\label{kir2}
		\end{align}
		We claim that $\bar{U}$ must satisfy $I(X;Y|\bar{U})>0$ and $I(X;\bar{U})=\epsilon$. For the first claim assume that $I(X;Y|\bar{U})=0$, hence the Markov chain $X-\bar{U}-Y$ holds. Using $X-\bar{U}-Y$ and $H(Y|\bar{U},X)=0$ we have $H(Y|\bar{U})=0$, hence $Y$ and $\bar{U}$ become independent. Using \eqref{kir2}
		\begin{align*}
		H(Y)&=I(Y;\bar{U})=I(X;\bar{U})\!+\!H(Y|X),\\
		&\Rightarrow I(X;\bar{U})=I(X;Y).
		\end{align*}
		The last line is a contradiction since by assumption we have $I(X;\bar{U})\leq \epsilon < I(X;Y)$. Thus, $I(X;Y|\bar{U})>0$.
		 For proving the second claim assume that $I(X;\bar{U})=\epsilon_1<\epsilon$. Let $U=(\bar{U},W)$ where $W=\begin{cases}
		 Y,\ \text{w.p}.\ \alpha\\
		 c,\ \ \text{w.p.}\ 1-\alpha
		 \end{cases}$,
		 and $c$ is a constant that $c\notin \mathcal{X}\cup \mathcal{Y}\cup \mathcal{\bar{U}}$ and $\alpha=\frac{\epsilon-\epsilon_1}{I(X;Y|\bar{U})}$.
		 We show that $\frac{\epsilon-\epsilon_1}{I(X;Y|\bar{U})}<1$. By the assumption we have
		 \begin{align*}
		 \frac{\epsilon-\epsilon_1}{I(X;Y|\bar{U})}<\frac{I(X;Y)-I(X;\bar{U})}{I(X;Y|\bar{U})}\stackrel{(a)}{\leq}1,
		 \end{align*}
		 step (a) follows since $I(X;Y)-I(X;\bar{U})-I(X;Y|\bar{U})=I(X;Y)-I(X;Y,\bar{U})\leq 0$. It can be seen that such $U$ satisfies $H(Y|X,U)=0$ and $I(X;U|Y)=0$ since
		 \begin{align*}
		 H(Y|X,U)&=\alpha H(Y|X,\bar{U},Y)\\ & \ \ \ + (1-\alpha)H(Y|X,\bar{U})=0,\\
		 I(X;U|Y)&= H(X|Y)-H(X|Y,\bar{U},W)\\&=H(X|Y)\!-\!\alpha H(X|Y,\bar{U})\!\\& \ \ \ -\!(1\!-\!\alpha) H(X|Y,\bar{U})\\&=H(X|Y)-H(X|Y)=0, 
		 \end{align*}
		 where in deriving the last line we used the Markov chain $X-Y-\bar{U}$. Furthermore, 
		 \begin{align*}
		 I(X;U)&=I(X;\bar{U},W)=I(X;\bar{U})+I(X;W|\bar{U})\\
		 &=I(X;\bar{U})+\alpha H(X|\bar{U})-\alpha H(X|\bar{U},Y)\\
		 &=I(X;\bar{U})+\alpha I(X;Y|\bar{U})\\
		 &=\epsilon_1+\epsilon-\epsilon_1=\epsilon,
		 \end{align*}
		 and
		 \begin{align*}
		 I(Y;U)&=I(X;U)\!+\!H(Y|X)\!-\!H(Y|U,X)\\& \ \ \ -\!I(X;U|Y)\\
		 &=\epsilon+H(Y|X).
		 \end{align*}
		 Thus, if $I(X;\bar{U})=\epsilon_1<\epsilon$ we can substitute $\bar{U}$ by $U$ for which $I(U;Y)>I(\bar{U};Y)$. This is a contraction and we conclude that $I(X;\bar{U})=\epsilon$ which proves the second claim. Hence, \eqref{kir2} can be rewritten as
		 \begin{align*}
		 I(\bar{U};Y)=\epsilon+H(Y|X).
		 \end{align*}
		 As a result $h_\epsilon(P_{XY})=\epsilon+H(Y|X)$ and the proof is completed.
		\item iii $\Rightarrow$ i: Let $\bar{U}$ be the optimizer of $h_{\epsilon}(P_{XY})$ and $h_{\epsilon}(P_{XY})=H(Y|X)+\epsilon$. Using Lemma~\ref{kir} we have $H(Y|\bar{U},X)=0$. By using \eqref{key} we must have 
		$I(X;\bar{U}|Y)=0$ and $I(X;\bar{U})=\epsilon$. We conclude that for this $\bar{U}$, the Markov chain $X-Y-\bar{U}$ holds and as a result $\bar{U}$ achieves $g_{\epsilon}(P_{XY})$ and we have $g_{\epsilon}(P_{XY})=H(Y|X)+\epsilon$. 
	\end{itemize}
\end{proof}
\subsection* {Special case: $\epsilon=0$ (Independent $X$ and $Y$)}
In this section we derive new lower and upper bounds for $h_0(P_{XY})$ and compare them with the previous bounds found in \cite{kostala}. We first state the definition of \emph{excess functional information} defined in
\cite{kosnane} as
\begin{align*}
\psi(X\rightarrow Y)=\inf_{\begin{array}{c} 
	\substack{P_{U|Y,X}: I(U;X)=0,\ H(Y|X,U)=0}
	\end{array}}I(X;U|Y),
\end{align*}
and the lower bound on $\psi(X\rightarrow Y)$ derived in \cite[Prop.~1]{kosnane} is given in the next lemma. Since this lemma is useful for deriving the lower bound we state it here.
\begin{lemma}\cite[Prop.~1]{kosnane}\label{haroomi}
	For discrete $Y$ we have 
	\begin{align}
	&\psi(X\rightarrow Y)\geq\nonumber\\& -\sum_{y\in\mathcal{Y}}\!\int_{0}^{1}\!\!\! \mathbb{P}_X\{P_{Y|X}(y|X)\geq t\}\log (\mathbb{P}_X\{P_{Y|X}(y|X)\geq t\})dt\nonumber\\&-I(X;Y),\label{koonsher}
	\end{align}
	where for $|\mathcal{Y}|=2$ the equality holds and it is attained by the Poisson functional representation in \cite{kosnane}.
\end{lemma}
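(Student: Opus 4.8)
The plan is to turn the definition of $\psi(X\rightarrow Y)$ into a combinatorial minimization over random subsets of $\mathcal{X}$ and then settle that minimization by a rearrangement (mean-preserving-spread) argument. First I would rewrite the objective: for any feasible kernel $P_{U|X,Y}$ (one with $I(U;X)=0$ and $H(Y|X,U)=0$), the chain rule gives $I(X;U|Y)=I(X;Y,U)-I(X;Y)=I(X;Y|U)-I(X;Y)$, using $I(X;U)=0$. Since $H(Y|X,U)=0$ forces $H(Y|X,U=u)=0$ on the support of $P_U$, and $X$ is independent of $U$, under $P_{\,\cdot\,|U=u}$ the variable $Y$ is a deterministic function $g_u$ of $X$ with $P_{X|U=u}=P_X$; hence $I(X;Y|U=u)=H(Y|U=u)=\sum_{y}\varphi\!\big(P_X(g_u^{-1}(y))\big)$, where $\varphi(p):=-p\log p$. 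Averaging over $u$,
\[
\psi(X\rightarrow Y)=\Big(\inf\ \sum_{y\in\mathcal{Y}}\sum_{u}P_U(u)\,\varphi\!\big(P_X(g_u^{-1}(y))\big)\Big)-I(X;Y),
\]
the infimum ranging over all such constructions $(U,\{g_u\})$.

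Next, for fixed $y$ write $q_x:=P_{Y|X}(y|x)$; the preimages $S_u:=g_u^{-1}(y)$ define a random subset $S\subseteq\mathcal{X}$ with law induced by $P_U$, and independence of $X,U$ gives $\Pr(x\in S)=\sum_u P_U(u)\mathbf{1}[g_u(x)=y]=q_x$ for all $x$. The inner sum for $y$ is then $\mathbb{E}[\varphi(P_X(S))]$, so it suffices to show that for every random subset $S$ with inclusion probabilities $q_x$,
\[
\mathbb{E}\big[\varphi(P_X(S))\big]\ \geq\ \int_{0}^{1}\varphi\!\big(\mathbb{P}_X\{P_{Y|X}(y|X)\geq t\}\big)\,dt ,
\]
after which summing over $y$ and subtracting $I(X;Y)$ gives the lemma.

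To prove this I would introduce the monotone coupling $S^{\star}=\{x:q_x\geq T\}$ with $T$ uniform on $[0,1]$ independent of $X$: it has the same inclusion probabilities, and $P_X(S^{\star})=\mathbb{P}_X\{P_{Y|X}(y|X)\geq T\}$, so it meets the right-hand side with equality. Since $P_X(S)$ and $P_X(S^{\star})$ have the common mean $\sum_x P_X(x)q_x=P_Y(y)$, it is enough to show that $P_X(S^{\star})$ dominates $P_X(S)$ in the convex order (is a mean-preserving spread of it); concavity of $\varphi$ then yields the inequality. I would obtain the convex-order statement by an exchange argument: whenever there is a pair $x_1,x_2$ with $q_{x_1}\geq q_{x_2}$ carrying positive mass on the configuration $\{x_1\notin S,\ x_2\in S\}$, shift equal mass off the two ``crossed'' configurations $\{x_1\notin S,x_2\in S\}$ and $\{x_1\in S,x_2\notin S\}$ onto $\{x_1\notin S,x_2\notin S\}$ and $\{x_1\in S,x_2\in S\}$; this leaves every marginal $q_x$ fixed and, conditioned on the membership of the remaining elements, is a mean-preserving spread of $P_X(S)$, hence does not increase $\mathbb{E}[\varphi(P_X(S))]$. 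Iterating these uncrossings drives $S$ to $S^{\star}$. This rearrangement step is the crux of the argument; the delicate points are handling ties among the $q_x$'s (broken arbitrarily), and, for $|\mathcal{Y}|=\infty$, justifying the limit by truncation and continuity of $\varphi$. I expect this, rather than the information-theoretic reduction, to be the main obstacle.

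For the equality claim when $|\mathcal{Y}|=2$: here $S_u=g_u^{-1}(y_1)$ already determines $g_u^{-1}(y_2)=\mathcal{X}\setminus S_u$, so the objective collapses to $\mathbb{E}[h(P_X(S_U))]$ with $h$ the binary entropy, which is concave and symmetric about $1/2$. A single monotone coupling $S^{\star}$ therefore attains the lower bound for $y_1$ and, since complementation reverses the order and $h$ is symmetric, simultaneously for $y_2$. It remains only to realize $S^{\star}$ by a bona fide discrete $U$ with $U$ independent of $X$ and $Y$ a deterministic function of $(X,U)$: the Poisson functional representation of \cite{kosnane} does precisely this, so the infimum is attained and the bound holds with equality.
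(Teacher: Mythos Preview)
The present paper does not prove this lemma at all: it is quoted directly from \cite[Prop.~1]{kosnane} and used as a black box to derive the upper bound $U_2^0$ on $h_0(P_{XY})$ in Theorem~4. There is therefore no proof in this paper to compare your proposal against.

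Your argument is nevertheless sound and is in the spirit of the original proof in \cite{kosnane}. The reduction $I(X;U|Y)=\mathbb{E}_U[H(Y|U)]-I(X;Y)$ via $I(X;U)=0$ and $H(Y|X,U)=0$, the identification $H(Y|U=u)=\sum_y\varphi(P_X(g_u^{-1}(y)))$, and the recognition that for fixed $y$ the preimage $S=g_U^{-1}(y)$ is a random subset with inclusion marginals $P_{Y|X}(y|\cdot)$ are all correct. The level-set coupling $S^\star=\{x:q_x\ge T\}$ with uniform $T$ does achieve the minimum of $\mathbb{E}[\varphi(P_X(S))]$ over all random subsets with those marginals; your uncrossing argument (moving mass from crossed to nested configurations) is a valid route to the convex-order statement for finite $\mathcal{X}$, and summing over $y$ yields the claimed lower bound since relaxing the partition constraint can only lower the infimum. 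For $|\mathcal{Y}|=2$ your observation that the complement of the monotone coupling for $y_1$ is the monotone coupling for $y_2$ (because $1-T$ is again uniform) is exactly why a single construction attains the bound with equality, and the Poisson functional representation of \cite{kosnane} supplies the discrete realization.
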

\begin{remark}
	The lower bound in \eqref{koonsher} can be negative. For instance, let $Y$ be a deterministic function of $X$, i.e., $H(Y|X)=0$. In this case we have $-\sum_{y\in\mathcal{Y}}\!\int_{0}^{1} \mathbb{P}_X\{P_{Y|X}(y|X)\geq t\}\log (\mathbb{P}_X\{P_{Y|X}(y|X)\geq t\})dt-I(X;Y)=-I(X;Y)=-H(Y).$
\end{remark}
In the next theorem lower and upper bounds on $h_0(P_{XY})$ are provided.
\begin{theorem}
	For any pair of RVs $(X,Y)$ distributed according to $P_{XY}$ supported on alphabets $\mathcal{X}$ and $\mathcal{Y}$ we have
	\begin{align*}
	\max\{L^0_1,L^0_2\} \leq h_{0}(P_{XY})\leq \min\{U^0_1,U^0_2\},
	\end{align*}
	where $L^0_1$ and $L^0_2$ are defined in Corollary~\ref{kooni} and 
	\begin{align*}
	&U^0_1 = H(Y|X),\\
	&U^0_2 =  H(Y|X) +\sum_{y\in\mathcal{Y}}\int_{0}^{1} \mathbb{P}_X\{P_{Y|X}(y|X)\geq t\}\times\\&\log (\mathbb{P}_X\{P_{Y|X}(y|X)\geq t\})dt+I(X;Y).
	\end{align*}	
	Furthermore, if $|\mathcal{Y}|=2$, then we have 
	\begin{align*}
	h_{0}(P_{XY}) = U^0_2.
	\end{align*}
\end{theorem}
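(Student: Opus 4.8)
The theorem packages together four claims: the lower bound $\max\{L_1^0,L_2^0\}\le h_0(P_{XY})$, the upper bound $h_0(P_{XY})\le U_1^0$, the upper bound $h_0(P_{XY})\le U_2^0$, and the exact characterization $h_0(P_{XY})=U_2^0$ when $|\mathcal{Y}|=2$. I would dispatch each in turn, relying heavily on the identity \eqref{key}, which after setting $\epsilon=0$ (so $I(X;U)=0$) and using $H(Y|U,X)=0$ reads $I(Y;U)=H(Y|X)-I(X;U|Y)$ for any feasible $U$.

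\textbf{Lower bound.} This is immediate from Corollary~\ref{kooni}, which is itself the $\epsilon=0$ specialization of Theorem~\ref{th.1}; nothing new is needed here beyond citing it.

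\textbf{Upper bound $U_1^0$.} Take any feasible $U$ for $h_0(P_{XY})$, i.e.\ with $I(X;U)=0$. Lemma~\ref{goh} with $\epsilon=0$ gives $h_0(P_{XY})\le H(Y|X)=U_1^0$ directly; alternatively, from \eqref{key} one has $I(Y;U)=H(Y|X)-H(Y|U,X)-I(X;U|Y)\le H(Y|X)$ since both subtracted terms are nonnegative.

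\textbf{Upper bound $U_2^0$.} Here is where the real content lies. First I would argue that in computing $h_0(P_{XY})$ it suffices to restrict to feasible $U$ that additionally satisfy $H(Y|U,X)=0$: by Lemma~\ref{kir} (with $\epsilon=0$) any optimizer $\bar U$ has $H(Y|\bar U,X)=0$, so the supremum over the smaller feasible set is still $h_0(P_{XY})$. On that restricted set, \eqref{key} gives $I(Y;U)=H(Y|X)-I(X;U|Y)$, hence
\begin{align*}
h_0(P_{XY})=H(Y|X)-\inf_{\substack{P_{U|Y,X}:\,I(U;X)=0,\ H(Y|U,X)=0}}I(X;U|Y)=H(Y|X)-\psi(X\rightarrow Y).
\end{align*}
Now apply the lower bound on $\psi(X\rightarrow Y)$ from Lemma~\ref{haroomi}: substituting \eqref{koonsher} yields $h_0(P_{XY})=H(Y|X)-\psi(X\rightarrow Y)\le H(Y|X)+\sum_{y}\int_0^1 \mathbb{P}_X\{P_{Y|X}(y|X)\ge t\}\log(\mathbb{P}_X\{P_{Y|X}(y|X)\ge t\})\,dt+I(X;Y)=U_2^0$. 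The main obstacle is making the reduction to the set $\{H(Y|U,X)=0\}$ airtight — one must be sure that restricting the feasible set does not decrease the supremum, which is exactly what Lemma~\ref{kir} secures since it shows an optimizer lies in that set (and for a non-attained supremum one approximates by near-optimizers and applies the same argument to each, or invokes the cardinality bound to guarantee attainment).

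\textbf{The $|\mathcal{Y}|=2$ case.} When $|\mathcal{Y}|=2$, Lemma~\ref{haroomi} states that \eqref{koonsher} holds with equality, attained by the Poisson functional representation. Plugging equality into the identity $h_0(P_{XY})=H(Y|X)-\psi(X\rightarrow Y)$ collapses the upper bound $U_2^0$ to an equality, giving $h_0(P_{XY})=U_2^0$. One should also note $U_2^0\le U_1^0$ in this regime (since $\psi\ge0$, equivalently the bracketed integral-plus-$I(X;Y)$ term is $\le 0$), so $\min\{U_1^0,U_2^0\}=U_2^0$ and the bounds are consistent.
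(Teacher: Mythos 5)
Your proposal is correct and follows essentially the same route as the paper: the lower bound by citing Corollary~\ref{kooni}, the bound $U_1^0$ from \eqref{key} (or Lemma~\ref{goh} at $\epsilon=0$), the bound $U_2^0$ by reducing to optimizers with $H(Y|U,X)=0$ so that $I(Y;U)=H(Y|X)-I(X;U|Y)$ and then invoking Lemma~\ref{haroomi}, and the $|\mathcal{Y}|=2$ equality from the tightness of \eqref{koonsher} via the Poisson functional representation. If anything, your explicit appeal to Lemma~\ref{kir} to justify restricting the supremum to the set $\{H(Y|U,X)=0\}$ is a step the paper leaves implicit, so no gap.
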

\begin{proof}
	 $L^0_1$ and $L^0_2$ can be obtained by letting $\epsilon=0$ in Theorem~\ref{th.1}. $U^0_1$ which has been derived in \cite[Th.~7]{kostala} can be obtained by \eqref{key}. $U^0_1$ can be derived as follows. Since $X$ and $U$ are independent, \eqref{key} can be rewritten as 
	 \begin{align*}
	 I(Y;U)=H(Y|X)-H(Y|U,X)-I(X;U|Y),
	 \end{align*} 
	 thus, using Lemma~\ref{haroomi}
	 \begin{align*}
	 h_0(P_{XY})&\leq H(Y|X)-\inf_{H(Y|U,X)=0,\ I(X;U)=0}I(X;U|Y)\\ &= H(Y|X)-\psi(X\rightarrow Y)\\&\leq H(Y|X)\\&+\sum_{y\in\mathcal{Y}}\int_{0}^{1} \mathbb{P}_X\{P_{Y|X}(y|X)\geq t\}\times\\&\log (\mathbb{P}_X\{P_{Y|X}(y|X)\geq t\})dt+I(X;Y). 	 \end{align*}
	 For $|\mathcal{Y}|=2$ using Lemma~\ref{haroomi} we have $\psi(X\rightarrow Y)=-\sum_{y\in\mathcal{Y}}\int_{0}^{1} \mathbb{P}_X\{P_{Y|X}(y|X)\geq t\}\log (\mathbb{P}_X\{P_{Y|X}(y|X)\geq t\})dt-I(X;Y)$ and let $\bar{U}$ be the RV that attains this bound. Thus,
	 \begin{align*}
	 I(\bar{U};Y)&=H(Y|X)\\&+\sum_{y\in\mathcal{Y}}\int_{0}^{1} \mathbb{P}_X\{P_{Y|X}(y|X)\geq t\}\times\\&\log (\mathbb{P}_X\{P_{Y|X}(y|X)\geq t\})dt+I(X;Y).
	 \end{align*}
	 Therefore, $\bar{U}$ attains $U_2^0$ and $h_0(P_{XY})=U_0^2$.
\end{proof}
	As mentioned before the upper bound $U^0_1$ has been derived in \cite[Th.~7]{kostala}. The upper bound $U^0_2$ is a new upper bound.
\begin{lemma}\label{ankhar}
	If $X$ is a deterministic function of $Y$, i.e., $H(X|Y)=0$, we have
	\begin{align*}
	&\sum_{y\in\mathcal{Y}}\int_{0}^{1} \mathbb{P}_X\{P_{Y|X}(y|X)\geq t\}\log (\mathbb{P}_X\{P_{Y|X}(y|X)\geq t\})dt\\&+I(X;Y)=0.
	\end{align*}
\end{lemma}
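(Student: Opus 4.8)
The plan is to exploit the very rigid structure that $P_{Y|X}$ inherits when $X$ is a deterministic function of $Y$. Write $X=f(Y)$. Then $P_{XY}(x,y)=P_Y(y)\mathbb{1}[f(y)=x]$, hence for every $y\in\mathcal{Y}$ with $P_Y(y)>0$ we have $P_{Y|X}(y\,|\,x')=\frac{P_Y(y)}{P_X(f(y))}$ if $x'=f(y)$ and $P_{Y|X}(y\,|\,x')=0$ otherwise. Therefore, for a fixed $y$ and any $t>0$, the event $\{P_{Y|X}(y\,|\,X)\geq t\}$ can occur only on $\{X=f(y)\}$, so that
\begin{align*}
\mathbb{P}_X\{P_{Y|X}(y\,|\,X)\geq t\}=P_X(f(y))\,\mathbb{1}\!\left[t\leq \tfrac{P_Y(y)}{P_X(f(y))}\right].
\end{align*}

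Next I would evaluate the inner integral for each $y$. Since $P_X(f(y))=\sum_{y':f(y')=f(y)}P_Y(y')\geq P_Y(y)$, the threshold $c_y:=\frac{P_Y(y)}{P_X(f(y))}$ lies in $[0,1]$, so the whole step function above is captured inside the integration range. Using the convention $0\log 0=0$, the integrand $\mathbb{P}_X\{\cdot\geq t\}\log\mathbb{P}_X\{\cdot\geq t\}$ equals $P_X(f(y))\log P_X(f(y))$ on $(0,c_y]$ and $0$ on $(c_y,1]$, whence
\begin{align*}
\int_0^1 \mathbb{P}_X\{P_{Y|X}(y\,|\,X)\geq t\}\log\mathbb{P}_X\{P_{Y|X}(y\,|\,X)\geq t\}\,dt = c_y\,P_X(f(y))\log P_X(f(y)) = P_Y(y)\log P_X(f(y)).
\end{align*}
Summing over $y$ and grouping by the value $x=f(y)$ gives
\begin{align*}
\sum_{y\in\mathcal{Y}} P_Y(y)\log P_X(f(y)) = \sum_{x\in\mathcal{X}}\Big(\sum_{y:f(y)=x}P_Y(y)\Big)\log P_X(x) = \sum_{x\in\mathcal{X}} P_X(x)\log P_X(x) = -H(X).
\end{align*}
Finally, since $H(X|Y)=0$ we have $I(X;Y)=H(X)-H(X|Y)=H(X)$, so adding $I(X;Y)$ to the previous line yields $-H(X)+H(X)=0$, which is the claim.

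The argument is essentially a bookkeeping computation; the only points needing a little care are checking that $c_y=P_Y(y)/P_X(f(y))\leq 1$ so that the step function's entire contribution falls within $[0,1]$, and correctly discarding the degenerate terms (those $y$ with $P_Y(y)=0$ and the region where the probability is zero) via $0\log 0=0$. I do not expect any genuine obstacle beyond this.
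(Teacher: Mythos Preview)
Your proof is correct and follows essentially the same route as the paper: both compute $P_{Y|X}(y\,|\,x)$ explicitly from $X=f(Y)$, reduce the inner integral to $P_Y(y)\log P_X(f(y))$ via the step-function structure, sum over $y$ grouped by $x=f(y)$ to obtain $-H(X)$, and conclude using $I(X;Y)=H(X)$. Your write-up is in fact slightly more careful, as you explicitly verify $c_y\leq 1$ and handle the $0\log 0$ convention.
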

\begin{proof}
	Since $X$ is a deterministic function of $Y$, for any $y\in \cal Y$ we have 
	\begin{align*}
	P_{Y|X}(y|x)=\begin{cases}
	\frac{P_Y(y)}{P_X(x)},\ &x=f(y)\\
	0, \ &\text{else} 
	\end{cases},
	\end{align*}
	thus,
	\begin{align*}
	&\sum_{y\in\mathcal{Y}}\int_{0}^{1} \mathbb{P}_X\{P_{Y|X}(y|X)\geq t\}\log (\mathbb{P}_X\{P_{Y|X}(y|X)\geq t\})dt\\
	&=	\sum_{y\in\mathcal{Y}}\!\int_{0}^{\frac{P_Y(y)}{P_X(x=f(y))}} \!\!\!\!\!\!\!\!\!\!\!\!\!\!\!\!\!\!\!\!\!\!\!\!\!\mathbb{P}_X\{P_{Y|X}(y|X)\geq t\}\log (\mathbb{P}_X\{P_{Y|X}(y|X)\geq t\})dt\\
	&= \sum_{y\in\mathcal{Y}} \frac{P_Y(y)}{\mathbb{P}_X\{x=f(y)\}}\mathbb{P}_X\{x=f(y)\}\log(\mathbb{P}_X\{x=f(y)\})\\
	&= \sum_{y\in\mathcal{Y}} P_Y(y)\log(\mathbb{P}_X\{x=f(y)\})\\
	&= \sum_{y\in\mathcal{Y}} P_X(x)\log(P_X(x))=-H(X)=-I(X;Y),
	\end{align*}
	where in last line we used $\sum_{y\in\mathcal{Y}} P_Y(y)\log(\mathbb{P}_X\{x=f(y)\})=\sum_{x\in\mathcal{X}} \sum_{y:x=f(y)} P_Y(y)\log(\mathbb{P}_X\{x=f(y)\})=\sum_{x\in\mathcal{X}} P_X(x)\log(P_X(x))$.
\end{proof}
\begin{remark}
	According to Lemma~\ref{ankhar}, if $X$ is a deterministic function of $Y$, then we have $U_2^0=U_1^0$.  
\end{remark}
In the next example we compare the bounds $U^0_1$ and $U^0_2$ for a $BSC(\theta)$.
\begin{figure}[]
	\centering
	\includegraphics[scale = .2]{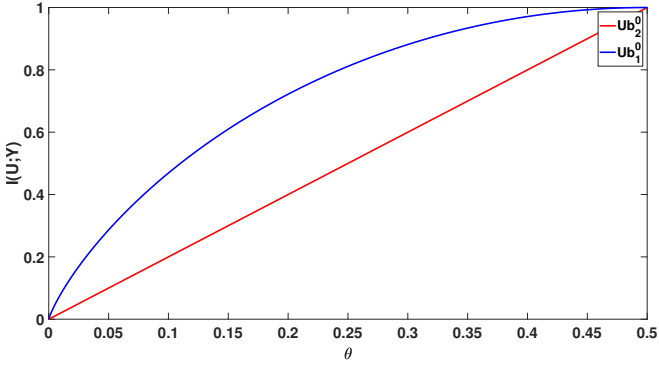}
	\caption{Comparing the upper bounds $U_1^0$ and $U_2^0$ for $BSC(\theta)$. The blue curve illustrates the upper bound found in \cite{kostala} and the red line shows the upper bound found in this work.}
	\label{fig:kir}
\end{figure}
\begin{example}(Binary Symmetric Channel)
   Let the binary RVs $X\in\{0,1\}$ and $Y\in\{0,1\}$ have the following joint distribution
   \begin{align*}
   P_{XY}(x,y)=\begin{cases}
   \frac{1-\theta}{2}, \ &x=y\\
   \frac{\theta}{2}, \ &x\neq y
   \end{cases},
   \end{align*}
   where $\theta<\frac{1}{2}$. We obtain
   \begin{align*}
   &\sum_{y\in\mathcal{Y}}\int_{0}^{1} \mathbb{P}_X\{P_{Y|X}(y|X)\geq t\}\log (\mathbb{P}_X\{P_{Y|X}(y|X)\geq t\})dt\\&=\int_{\theta}^{1-\theta} \mathbb{P}_X\{P_{Y|X}(0|X)\geq t\}\log (\mathbb{P}_X\{P_{Y|X}(0|X)\geq t\})dt\\&+\int_{\theta}^{1-\theta} \mathbb{P}_X\{(P_{Y|X}(1|X)\geq t\}\log (\mathbb{P}_X\{P_{Y|X}(1|X)\geq t\})dt\\&=(1-2\theta)\left(P_X(0)\log (P_X(0))+P_X(1)\log (P_X(1))\right)\\&=-(1-2\theta)H(X)=-(1-2\theta).
   \end{align*}
   Thus,
   \begin{align*}
   U_2^0&=H(Y|X) +\sum_{y\in\mathcal{Y}}\int_{0}^{1} \mathbb{P}_X\{P_{Y|X}(y|X)\geq t\}\times\\&\log (\mathbb{P}_X\{P_{Y|X}(y|X)\geq t\})dt+I(X;Y)\\&=h(\theta)-(1-2\theta)+(1-h(\theta))=2\theta,\\
   U_1^0&=H(Y|X)=h(\theta),
   \end{align*}
   where $h(\cdot)$ corresponds to the binary entropy function. As shown in Fig.~ \ref{fig:kir}, we have
   \begin{align*}
   h_{0}(P_{XY})\leq U^0_2\leq U^0_1.
   \end{align*}
\end{example}
\begin{example}(Erasure Channel)
	Let the RVs $X\in\{0,1\}$ and $Y\in\{0,e,1\}$ have the following joint distribution
	\begin{align*}
	P_{XY}(x,y)=\begin{cases}
	\frac{1-\theta}{2}, \ &x=y\\
	\frac{\theta}{2}, \ &y=e\\
	0, \ & \text{else}
	\end{cases},
	\end{align*}
	where $\theta<\frac{1}{2}$. We have
	\begin{align*}
	&\sum_{y\in\mathcal{Y}}\int_{0}^{1} \mathbb{P}_X\{P_{Y|X}(y|X)\geq t\}\log (\mathbb{P}_X\{P_{Y|X}(y|X)\geq t\})dt\\&=\int_{0}^{1-\theta} \mathbb{P}_X\{P_{Y|X}(0|X)\geq t\}\log (\mathbb{P}_X\{P_{Y|X}(0|X)\geq t\})dt\\&+\int_{0}^{1-\theta} \mathbb{P}_X\{P_{Y|X}(1|X)\geq t\}\log (\mathbb{P}_X\{P_{Y|X}(1|X)\geq t\})dt\\&=-(1-\theta)H(X)=-(1-\theta).
	\end{align*}
	Thus, 
	\begin{align*}
	U_2^0&=H(Y|X) +\sum_{y\in\mathcal{Y}}\int_{0}^{1} \mathbb{P}_X\{P_{Y|X}(y|X)\geq t\}\times\\&\log (\mathbb{P}_X\{P_{Y|X}(y|X)\geq t\})dt+I(X;Y)\\&=h(\theta)-(1-\theta)+h(\theta)+1-\theta-h(\theta) \\&=h(\theta),\\
	 U_1^0&=H(Y|X)=h(\theta). 
	\end{align*}
	Hence, in this case, $U_1^0=U_2^0=h(\theta)$. Furthermore, in \cite[Example~8]{kostala}, it has been shown that for this pair of $(X,Y)$ we have $g_0(P_{XY})=h_0(P_{XY})=h(\theta)$.
\end{example}
In \cite[Prop.~2]{kosnane} it has been shown that for every $\alpha\geq 0$, there exist a pair $(X,Y)$ such that $I(X;Y)\geq \alpha$ and 
\begin{align}
\psi(X\rightarrow Y)\geq \log(I(X;Y)+1)-1.\label{kir1}
\end{align}
\begin{lemma}\label{choon}
	Let $(X,Y)$ be as in \cite[Prop.~2]{kosnane}, i.e. $(X,Y)$ satisfies \eqref{kir1}. Then for such pair we have
	\begin{align*}
	&H(Y|X)-\log(I(X;Y)+1)-4\\&\leq h_0(P_{XY})\leq H(Y|X)-\log(I(X;Y)+1)+1. 
	\end{align*}
	\begin{proof}
		The lower bound follows from Corollary~1. For the upper bound, we use \eqref{key} and \eqref{kir1} so that
		\begin{align*}
		I(U;Y)&\leq H(Y|X)-\psi(X\rightarrow Y)\\ &\leq H(Y|X)-\log(I(X;Y)+1)+1.
		\end{align*}
	\end{proof}
\end{lemma}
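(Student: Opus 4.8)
The plan is to establish the two bounds independently, each as a quick consequence of results already assembled above. The lower bound needs no new work: Corollary~\ref{kooni} gives $h_0(P_{XY})\ge\max\{L_1^0,L_2^0\}$, and in particular $h_0(P_{XY})\ge L_2^0=H(Y|X)-\left(\log(I(X;Y)+1)+4\right)$, which is exactly the claimed left-hand inequality. (Recall that $L_2^0$ is obtained by feeding the RV produced by the Strong Functional Representation Lemma, Lemma~\ref{lemma2}, into the identity \eqref{key}, and then setting $\epsilon=0$ in the $L_2^\epsilon$ bound of Theorem~\ref{th.1}.)

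For the upper bound I would first derive the generic estimate $h_0(P_{XY})\le H(Y|X)-\psi(X\rightarrow Y)$. With $\epsilon=0$ the leakage constraint forces $I(U;X)=0$, so \eqref{key} collapses to $I(Y;U)=H(Y|X)-H(Y|U,X)-I(X;U|Y)$. By Lemma~\ref{kir} (specialized to $\epsilon=0$) an optimizer $\bar U$ of $h_0(P_{XY})$ satisfies $H(Y|X,\bar U)=0$, hence $\bar U$ is feasible in the variational problem defining the excess functional information $\psi(X\rightarrow Y)$; therefore $I(X;\bar U|Y)\ge\psi(X\rightarrow Y)$ and $h_0(P_{XY})=I(Y;\bar U)=H(Y|X)-I(X;\bar U|Y)\le H(Y|X)-\psi(X\rightarrow Y)$. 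Then I invoke the defining property \eqref{kir1} of the pair $(X,Y)$ taken from \cite[Prop.~2]{kosnane}, namely $\psi(X\rightarrow Y)\ge\log(I(X;Y)+1)-1$, to obtain $h_0(P_{XY})\le H(Y|X)-\log(I(X;Y)+1)+1$, the claimed right-hand inequality.

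There is essentially no genuine obstacle: the statement is a direct specialization of machinery already in the paper. The one step deserving a sentence of care is the reduction $h_0(P_{XY})\le H(Y|X)-\psi(X\rightarrow Y)$, which relies on knowing that an optimizer of $h_0$ can be taken with $H(Y|X,\bar U)=0$ so that it lies in the feasible set of the infimum defining $\psi$; this is exactly Lemma~\ref{kir} at $\epsilon=0$, and the same reduction already appears (for the upper bounds $U_1^0,U_2^0$) in the proof of the preceding theorem, so I would simply reuse it.
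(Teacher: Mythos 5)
Your proposal is correct and follows essentially the same route as the paper: the lower bound is read off from $L_2^0$ in Corollary~\ref{kooni}, and the upper bound combines \eqref{key} with the reduction $h_0(P_{XY})\leq H(Y|X)-\psi(X\rightarrow Y)$ and then \eqref{kir1}. Your explicit justification of that reduction via Lemma~\ref{kir} (so that the optimizer lies in the feasible set of the infimum defining $\psi$) is a useful detail that the paper leaves implicit, but it is not a different argument.
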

\begin{remark}
	From Lemma~\ref{choon} and Corollary~1 we can conclude that the lower bound $L_2^0=H(Y|X)-(\log(I(X;Y)+1)+4)$ is tight within $5$ bits.
\end{remark}
\section{conclusion}\label{concull}
It has been shown that by extending the FRL and SFRL, upper bound for $h_{\epsilon}(P_{XY})$ and $g_{\epsilon}(P_{XY})$ and lower bound for $h_{\epsilon}(P_{XY})$ can be derived. 
If $X$ is a deterministic function of $Y$, then the bounds are tight. Moreover, a necessary condition for an optimizer of $h_{\epsilon}(P_{XY})$ has been obtained. In the case of perfect privacy, new lower and upper bounds are derived using ESFRL and excess functional information. In an example it has been shown that new bounds are dominant compared to the previous bounds.

\bibliographystyle{IEEEtran}
\bibliography{IEEEabrv,IZS}

\begin{thebibliography}{10}
\providecommand{\url}[1]{#1}
\csname url@samestyle\endcsname
\providecommand{\newblock}{\relax}
\providecommand{\bibinfo}[2]{#2}
\providecommand{\BIBentrySTDinterwordspacing}{\spaceskip=0pt\relax}
\providecommand{\BIBentryALTinterwordstretchfactor}{4}
\providecommand{\BIBentryALTinterwordspacing}{\spaceskip=\fontdimen2\font plus
\BIBentryALTinterwordstretchfactor\fontdimen3\font minus
  \fontdimen4\font\relax}
\providecommand{\BIBforeignlanguage}[2]{{%
\expandafter\ifx\csname l@#1\endcsname\relax
\typeout{** WARNING: IEEEtran.bst: No hyphenation pattern has been}%
\typeout{** loaded for the language `#1'. Using the pattern for}%
\typeout{** the default language instead.}%
\else
\language=\csname l@#1\endcsname
\fi
#2}}
\providecommand{\BIBdecl}{\relax}
\BIBdecl

\bibitem{Calmon2}
H.~{Wang}, L.~{Vo}, F.~P. {Calmon}, M.~{M\'{e}dard}, K.~R. {Duffy}, and
  M.~{Varia}, ``Privacy with estimation guarantees,'' \emph{IEEE Transactions
  on Information Theory}, vol.~65, no.~12, pp. 8025--8042, Dec 2019.

\bibitem{yamamoto}
H.~Yamamoto, ``A source coding problem for sources with additional outputs to
  keep secret from the receiver or wiretappers (corresp.),'' \emph{IEEE
  Transactions on Information Theory}, vol.~29, no.~6, pp. 918--923, 1983.

\bibitem{sankar}
L.~Sankar, S.~R. Rajagopalan, and H.~V. Poor, ``Utility-privacy tradeoffs in
  databases: An information-theoretic approach,'' \emph{IEEE Transactions on
  Information Forensics and Security}, vol.~8, no.~6, pp. 838--852, 2013.

\bibitem{borz}
B.~{Rassouli} and D.~{G\"{u}nd\"{u}z}, ``On perfect privacy,'' \emph{IEEE
  Journal on Selected Areas in Information Theory}, vol.~2, no.~1, pp.
  177--191, 2021.

\bibitem{gun}
S.~{Sreekumar} and D.~{G\"{u}nd\"{u}z}, ``Optimal privacy-utility trade-off
  under a rate constraint,'' in \emph{2019 IEEE International Symposium on
  Information Theory}, July 2019, pp. 2159--2163.

\bibitem{khodam}
A.~Zamani, T.~J. Oechtering, and M.~Skoglund, ``A design framework for strongly
  $\chi^2$-private data disclosure,'' \emph{IEEE Transactions on Information
  Forensics and Security}, vol.~16, pp. 2312--2325, 2021.

\bibitem{Khodam22}
{A. Zamani, T. J. Oechtering, and M. Skoglund}, ``Data disclosure with non-zero
  leakage and non-invertible leakage matrix,'' \emph{IEEE Transactions on
  Information Forensics and Security}, vol.~17, pp. 165--179, 2022.

\bibitem{kostala}
Y.~Y. Shkel, R.~S. Blum, and H.~V. Poor, ``Secrecy by design with applications
  to privacy and compression,'' \emph{IEEE Transactions on Information Theory},
  vol.~67, no.~2, pp. 824--843, 2021.

\bibitem{issa}
I.~{Issa}, S.~{Kamath}, and A.~B. {Wagner}, ``An operational measure of
  information leakage,'' in \emph{2016 Annual Conference on Information Science
  and Systems}, March 2016, pp. 234--239.

\bibitem{makhdoumi}
A.~Makhdoumi, S.~Salamatian, N.~Fawaz, and M.~M{\'e}dard, ``From the
  information bottleneck to the privacy funnel,'' in \emph{2014 IEEE
  Information Theory Workshop}, 2014, pp. 501--505.

\bibitem{dwork1}
C.~Dwork, F.~McSherry, K.~Nissim, and A.~Smith, ``Calibrating noise to
  sensitivity in private data analysis,'' in \emph{Theory of cryptography
  conference}.\hskip 1em plus 0.5em minus 0.4em\relax Springer, 2006, pp.
  265--284.

\bibitem{calmon4}
F.~P. {Calmon}, A.~{Makhdoumi}, M.~{Medard}, M.~{Varia}, M.~{Christiansen}, and
  K.~R. {Duffy}, ``Principal inertia components and applications,'' \emph{IEEE
  Transactions on Information Theory}, vol.~63, no.~8, pp. 5011--5038, Aug
  2017.

\bibitem{issajoon}
I.~Issa, A.~B. Wagner, and S.~Kamath, ``An operational approach to information
  leakage,'' \emph{IEEE Transactions on Information Theory}, vol.~66, no.~3,
  pp. 1625--1657, 2020.

\bibitem{asoo}
S.~Asoodeh, M.~Diaz, F.~Alajaji, and T.~Linder, ``Estimation efficiency under
  privacy constraints,'' \emph{IEEE Transactions on Information Theory},
  vol.~65, no.~3, pp. 1512--1534, 2019.

\bibitem{Total}
B.~Rassouli and D.~{G\"{u}nd\"{u}z}, ``Optimal utility-privacy trade-off with
  total variation distance as a privacy measure,'' \emph{IEEE Transactions on
  Information Forensics and Security}, vol.~15, pp. 594--603, 2020.

\bibitem{issa2}
I.~Issa, S.~Kamath, and A.~B. Wagner, ``Maximal leakage minimization for the
  shannon cipher system,'' in \emph{2016 IEEE International Symposium on
  Information Theory}, 2016, pp. 520--524.

\bibitem{kosnane}
C.~T. Li and A.~E. Gamal, ``Strong functional representation lemma and
  applications to coding theorems,'' \emph{IEEE Transactions on Information
  Theory}, vol.~64, no.~11, pp. 6967--6978, 2018.

\bibitem{shahab}
\BIBentryALTinterwordspacing
S.~Asoodeh, M.~Diaz, F.~Alajaji, and T.~Linder, ``Information extraction under
  privacy constraints,'' \emph{Information}, vol.~7, no.~1, 2016. [Online].
  Available: \url{https://www.mdpi.com/2078-2489/7/1/15}
\BIBentrySTDinterwordspacing

\end{thebibliography}


\begin{thebibliography}{10}
\providecommand{\url}[1]{#1}
\csname url@samestyle\endcsname
\providecommand{\newblock}{\relax}
\providecommand{\bibinfo}[2]{#2}
\providecommand{\BIBentrySTDinterwordspacing}{\spaceskip=0pt\relax}
\providecommand{\BIBentryALTinterwordstretchfactor}{4}
\providecommand{\BIBentryALTinterwordspacing}{\spaceskip=\fontdimen2\font plus
\BIBentryALTinterwordstretchfactor\fontdimen3\font minus
  \fontdimen4\font\relax}
\providecommand{\BIBforeignlanguage}[2]{{%
\expandafter\ifx\csname l@#1\endcsname\relax
\typeout{** WARNING: IEEEtran.bst: No hyphenation pattern has been}%
\typeout{** loaded for the language `#1'. Using the pattern for}%
\typeout{** the default language instead.}%
\else
\language=\csname l@#1\endcsname
\fi
#2}}
\providecommand{\BIBdecl}{\relax}
\BIBdecl

\bibitem{issa}
I.~{Issa}, S.~{Kamath}, and A.~B. {Wagner}, ``An operational measure of
  information leakage,'' in \emph{2016 Annual Conference on Information Science
  and Systems}, March 2016, pp. 234--239.

\bibitem{makhdoumi}
A.~Makhdoumi, S.~Salamatian, N.~Fawaz, and M.~M{\'e}dard, ``From the
  information bottleneck to the privacy funnel,'' in \emph{2014 IEEE
  Information Theory Workshop}, 2014, pp. 501--505.

\bibitem{dwork1}
C.~Dwork, F.~McSherry, K.~Nissim, and A.~Smith, ``Calibrating noise to
  sensitivity in private data analysis,'' in \emph{Theory of cryptography
  conference}.\hskip 1em plus 0.5em minus 0.4em\relax Springer, 2006, pp.
  265--284.

\bibitem{Calmon2}
H.~{Wang}, L.~{Vo}, F.~P. {Calmon}, M.~{M\'{e}dard}, K.~R. {Duffy}, and
  M.~{Varia}, ``Privacy with estimation guarantees,'' \emph{IEEE Transactions
  on Information Theory}, vol.~65, no.~12, pp. 8025--8042, Dec 2019.

\bibitem{yamamoto}
H.~Yamamoto, ``A source coding problem for sources with additional outputs to
  keep secret from the receiver or wiretappers (corresp.),'' \emph{IEEE
  Transactions on Information Theory}, vol.~29, no.~6, pp. 918--923, 1983.

\bibitem{sankar}
L.~Sankar, S.~R. Rajagopalan, and H.~V. Poor, ``Utility-privacy tradeoffs in
  databases: An information-theoretic approach,'' \emph{IEEE Transactions on
  Information Forensics and Security}, vol.~8, no.~6, pp. 838--852, 2013.

\bibitem{deniz6}
B.~{Rassouli} and D.~{G\"{u}nd\"{u}z}, ``On perfect privacy,'' in \emph{2018
  IEEE International Symposium on Information Theory (ISIT)}, June 2018, pp.
  2551--2555.

\bibitem{Amir}
A.~Zamani, T.~J. Oechtering, and M.~Skoglund, ``A design framework for
  epsilon-private data disclosure,'' \emph{arXiv preprint arXiv:2009.01704}, 3
  Sep 2020.

\bibitem{jende}
B.~Razeghi, F.~P. Calmon, D.~{G\"{u}nd\"{u}z}, and S.~Voloshynovskiy, ``On
  perfect obfuscation: Local information geometry analysis,'' in \emph{2020
  IEEE International Workshop on Information Forensics and Security}, 2020, pp.
  1--6.

\bibitem{borade}
S.~Borade and L.~Zheng, ``Euclidean information theory,'' in \emph{2008 IEEE
  International Zurich Seminar on Communications}.\hskip 1em plus 0.5em minus
  0.4em\relax IEEE, 2008, pp. 14--17.

\bibitem{huang}
S.~L. Huang and L.~Zheng, ``Linear information coupling problems,'' in
  \emph{2012 IEEE International Symposium on Information Theory
  Proceedings}.\hskip 1em plus 0.5em minus 0.4em\relax IEEE, 2012, pp.
  1029--1033.

\bibitem{el2011network}
A.~El~Gamal and Y.-H. Kim, \emph{Network information theory}.\hskip 1em plus
  0.5em minus 0.4em\relax Cambridge university press, 2011.

\bibitem{Amir2}
A.~Zamani, T.~J. Oechtering, and M.~Skoglund, ``Data disclosure mechanism
  design with non-zero leakage,'' 2020,
  \url{https://people.kth.se/~oech/ITWlong.pdf}.

\end{thebibliography}


\begin{thebibliography}{10}
\providecommand{\url}[1]{#1}
\csname url@samestyle\endcsname
\providecommand{\newblock}{\relax}
\providecommand{\bibinfo}[2]{#2}
\providecommand{\BIBentrySTDinterwordspacing}{\spaceskip=0pt\relax}
\providecommand{\BIBentryALTinterwordstretchfactor}{4}
\providecommand{\BIBentryALTinterwordspacing}{\spaceskip=\fontdimen2\font plus
\BIBentryALTinterwordstretchfactor\fontdimen3\font minus
  \fontdimen4\font\relax}
\providecommand{\BIBforeignlanguage}[2]{{%
\expandafter\ifx\csname l@#1\endcsname\relax
\typeout{** WARNING: IEEEtran.bst: No hyphenation pattern has been}%
\typeout{** loaded for the language `#1'. Using the pattern for}%
\typeout{** the default language instead.}%
\else
\language=\csname l@#1\endcsname
\fi
#2}}
\providecommand{\BIBdecl}{\relax}
\BIBdecl

\bibitem{rassoul1}
B.~Rassouli and D.~G\"{u}nd\"{u}z, ``On perfect privacy and maximal
  correlation,'' \emph{arXiv preprint arXiv:1712.08500}, 2017.

\bibitem{makhdoumi}
A.~Makhdoumi, S.~Salamatian, N.~Fawaz, and M.~M{\'e}dard, ``From the
  information bottleneck to the privacy funnel,'' in \emph{2014 IEEE
  Information Theory Workshop (ITW 2014)}.\hskip 1em plus 0.5em minus
  0.4em\relax IEEE, 2014, pp. 501--505.

\bibitem{tishby}
N.~Tishby, F.~C. Pereira, and W.~Bialek, ``The information bottleneck method,''
  \emph{arXiv preprint physics/0004057}, 2000.

\bibitem{yamamoto}
H.~Yamamoto, ``A source coding problem for sources with additional outputs to
  keep secret from the receiver or wiretappers (corresp.),'' \emph{IEEE
  Transactions on Information Theory}, vol.~29, no.~6, pp. 918--923, 1983.

\bibitem{sankar}
L.~Sankar, S.~R. Rajagopalan, and H.~V. Poor, ``Utility-privacy tradeoffs in
  databases: An information-theoretic approach,'' \emph{IEEE Transactions on
  Information Forensics and Security}, vol.~8, no.~6, pp. 838--852, 2013.

\bibitem{dwork1}
C.~Dwork, F.~McSherry, K.~Nissim, and A.~Smith, ``Calibrating noise to
  sensitivity in private data analysis,'' in \emph{Theory of cryptography
  conference}.\hskip 1em plus 0.5em minus 0.4em\relax Springer, 2006, pp.
  265--284.

\bibitem{dwork2}
C.~Dwork, ``Differential privacy, in automata, languages and programming,''
  \emph{ser. Lecture Notes in Computer Scienc}, vol. 4052, p. 112, 2006.

\bibitem{oech}
Z.~Li, T.~J. Oechtering, and D.~G{\"u}nd{\"u}z, ``Privacy against a hypothesis
  testing adversary,'' \emph{IEEE Transactions on Information Forensics and
  Security}, vol.~14, no.~6, pp. 1567--1581, 2018.

\bibitem{borade}
S.~Borade and L.~Zheng, ``Euclidean information theory,'' in \emph{2008 IEEE
  International Zurich Seminar on Communications}.\hskip 1em plus 0.5em minus
  0.4em\relax IEEE, 2008, pp. 14--17.

\bibitem{huang}
S.-L. Huang and L.~Zheng, ``Linear information coupling problems,'' in
  \emph{2012 IEEE International Symposium on Information Theory
  Proceedings}.\hskip 1em plus 0.5em minus 0.4em\relax IEEE, 2012, pp.
  1029--1033.

\bibitem{huang2}
S.-L. Huang, C.~Suh, and L.~Zheng, ``Euclidean information theory of
  networks,'' \emph{IEEE Transactions on Information Theory}, vol.~61, no.~12,
  pp. 6795--6814, 2015.

\end{thebibliography}
\end{document}